\journalname{Journal of Mathematical Biology}
\newcommand{\R}{\mathbb R}   
\newcommand{\C}{\mathbb C}   
\newtheorem{thm}{Theorem} [section]
\newtheorem{prop}[thm]{Proposition}
\newtheorem{lem}[thm]{Lemma}
\newtheorem{cor}[thm]{Corollary}
\newtheorem{rk}[thm]{Remark}
\def\qq#1{\qquad \mbox{#1}\quad}
\newcommand{\al}{\alpha}
\newcommand{\e}{\varepsilon}
\newcommand{\la}{\lambda}
\newcommand{\La}{\Lambda}
\newcommand{\s}{\sigma}
\newcommand{\te}{\theta}
\begin{document}
\title{Waves of cells with an unstable phenotype accelerate the progression of high-grade brain tumors}

\author{Rosa Pardo \and Alicia Mart\'{\i}nez-Gonz\'alez \and V\'{\i}ctor M. P\'erez-Garc\'{\i}a}

\institute{R. Pardo \at
Departamento de Matem\'atica Aplicada,
Facultad de Ciencias Qu\'{\i}micas, Universidad Complutense, Avda. Complutense s/n,
28040 Madrid, Spain  \email{rpardo@mat.ucm.es} \\
A. Mart\'{\i}nez-Gonz\'alez,  V\'{\i}ctor M. P\'erez-Garc\'{\i}a \at Departamento de
Matem\'aticas, E. T. S. I.  Industriales and Instituto de Matem\'atica Aplicada a la Ciencia y la Ingenier\'{\i}a, Universidad de Castilla-La
Mancha, 13071 Ciudad Real, Spain.
\email{victor.perezgarcia@uclm.es, alicia.martinez@uclm.es}}
\date{Received: date / Accepted: date}
\maketitle

\begin{abstract}

In this paper we study  a reduced  continuous model describing the local evolution of high grade gliomas - a lethal type of primary brain tumor - through the interplay of different cellular phenotypes. We show how hypoxic events, even sporadic and/or limited in space may have a crucial role on the acceleration of the growth speed of high grade gliomas. Our modeling approach is based on  two cellular phenotypes one of them being more migratory and the second one more proliferative with transitions between them being driven by the local oxygen values, assumed in this simple model to be uniform. Surprisingly even acute hypoxia events (i.e. very localized in time) leading to the appearance of migratory populations have the potential of accelerating the invasion speed of the proliferative phenotype up to speeds close to those of the migratory phenotype. The high invasion speed of the tumor persists for times much longer than the lifetime of the hypoxic event and the phenomenon is observed both when the migratory cells form a persistent wave of cells located on the invasion front and when they form a evanecent wave dissapearing after a short time by decay into the more proliferative phenotype.

Our findings are obtained through numerical simulations of the model equations. We also provide a deeper mathematical analysis of some aspects of the problem such as the conditions for the existence of persistent waves of cells with a more migratory phenotype.

\keywords{High grade glioma \ Tumor hypoxia \ Brain tumor progression}

\end{abstract}

\section{Introduction}

\label{intro}

Malignant gliomas are the most frequent type of primary brain tumor. Between them, the most aggressive and prevalent  glioma in adults is the glioblastoma multiforme (GBM), a grade IV astrocytic tumor \citep{Wen}. Mean survival after GBM diagnosis is around 14 months using the standard of care which includes surgery to resect as much tumoral tissue as possible, radiotherapy and chemotherapy (temozolamide) \citep{Mangiola2010}. Despite advances in understanding the complex biology of these tumors, the overall prognosis has  improved only slightly in the past three decades.

The main reason for treatment failure is that the periphery of the GBM typically shows tumor cells infiltrating into the normal brain, frequently even in the contralateral hemispherium. Thus, even the so-called gross total resection of the tumor does not eliminate many migrating cells that cause tumor recurrence \citep{Onishi,Berens}, typically in less than six months after surgery \citep{Giese}.

Pathological features of GBM are cellular pleomorphism, high cellular proliferation and diffuse infiltration, necrosis in the central regions of the tumor, microvascular hyperplasia and hypercellular areas surrounding necrotic areas around thrombosed vessels called pseudopalisades.

Due to the abnormal cell proliferation, the pre-existing vascular network is not able to appropriately feed the tumor cells. Angiogenesis emerges then in response to proangiogenic growth factors that are released by hypoxic cells in the tumor such as vascular endothelial growth factor (VEGF) \citep{Ebos}. The end result of VEGF signaling in tumors is the production of immature, highly permeable blood vessels with subsequent poor maintenance of the blood brain barrier and parenchymal edema \citep{Jensen} leading to hypoxia.

Tumor hypoxia is generally recognized as a negative clinical prognostic and predictive factor owing to its involvement in various cancer hallmarks such as resistance to cell death, angiogenesis, invasiveness, metastasis, altered metabolism and genomic instability \citep{Ranalli2009,Hanahan2011}. Hypoxia plays a central role in tumor progression and resistance to therapy (chemo- and radioresistance), specially in GBM, where
it has been proven to play a key role in the biology and aggression of these cancers \citep{Evans2004}. These facts have motivated considering hypoxia as a therapeutic target in cancer. Some hypoxia-regulated molecules, including hypoxia inducible factor-1 (HIF-1), carbonic anhydrase IX, glucose transporter 1, and VEGF, may be suitable targets for therapies.  HIF-1 is a regulator transcriptor of cell adaptation to hypoxia efficiently translated under normoxic and hypoxic conditions, however subunit HIF-1$\alpha$ contains an oxygen-dependent degradation domain 
which is rapidly degraded in normoxia.

Endothelial injury and prothrombotic factors secreted  by glioma cells \citep{Dutzmann2010} lead to vaso-occlusive events. Following these events  necrotic regions and waves of hypoxic cells moving away from the perivascular anoxic regions \citep{pseudopalisading2,vaso-occlusive,Alicia2012} are generated. Cells located in the perivascular areas have both oxygen and nutrients and have a high proliferative activity. However, cells exposed to hypoxia display increased migration and slower proliferation to deal with a more aggressive environment \citep{Zagzag2000,Elstner2007,Das2008}. This phenomenon has been called the \emph{go-or-grow dichotomy} and studied in detail in gliomas \citep{Giese1998,Giese}.
  

Many mathematical papers  have used the concept of the migration-proliferation dichotomy to explain different aspects of the behavior of tumor cell populations in vitro or in vivo \citep{Iomin,Stein2007,Fedotov2,Pham2011,Tektonidis2011,Hatzikirou2012,Alicia2012,Alicia2014}.
Specifically, several works have considered the role of hypoxia in gliomas finding a potential beneficial effect of its reduction via either the
increase of the oxygen tension in the tumour \citep{Hatzikirou2012} or by the reduction of the occurrence of thrombotic events \citep{Alicia2012,Alicia2014}.

 In this work we complement previous studies and show that despite the initial idea of improving oxygenation is reasonable and well founded and may lead to better response to therapies, the effects of hypoxia may be more perverse than initially considered in previous works since \emph{even minimal amounts of hypoxic events may lead to accelerated progression in gliomas, even when oxygenation is rapidly restored and persistent for very long times.}

The plan of the paper is as follows. First, in Sec. \ref{model} we present the mathematical model, some preliminary theoretical results and discuss the parameter ranges of interest. In Sec. \ref{results} we present the results of our numerical simulations showing the acceleration of invasion by \emph{waves of cells with an hypoxic phenotype}. A detailed theoretical study with some rigorous results on travelling waves of the system under study is developed in Sec. \ref{Theoretical}. Finally in Sec. \ref{discussion} we discuss the practical implications of our findings and summarize our results.

\section{The Model}\label{model}
\par

\subsection{Derivation of the Model}\label{Derivation}
\par

Following the go-or-grow dichotomy concept we will describe the tumor as expressing two different phenotypes: a proliferative one to be  denoted as $u_n(x,y)$ and a migratory one $u_h(x,y)$. We will consider that the force driving phenotype changes is the local oxygen pressure so that in hypoxic conditions tumor cells change to a mobile phenotype in a characteristic time $\tau_{nh}$ and in normoxic conditions tumor cells acquire a proliferative phenotype in a time $\tau_{hn}$.

Detailed models of these processes have been proposed in several papers (see e.g. \citet{Alicia2012,Alicia2014}).
In this paper we will use a minimal model intended to capture the essentials of a striking phenomenon, the acceleration of tumor invasion due to sporadic hypoxic events.

We will assume that an initial hypoxic event around a capilar lasts a sufficient time for a complete phenotype switch to the migratory phenotype of the tumor cells located around it. This is reasonable because of the fast response of HIF-1 under hypoxia, that induces phenotypic changes in a characteristic time of the order of minutes \citep{Jewell2001}.

Thus, we will take initially our tumor density to be of the form $u_n(x,t=0) = 0, u_h(x,t=0) = u_h^0(x)$ and localized around a tumor vessel. Once oxygen supply is restored we will assume that oxygenation is maintained above the hypoxia level for all times, may be due to the action of a therapy normalizing vasculature, avoiding thrombotic events and/or increasing oxygenation.
In that scenario the dynamics will be described by the equations
\begin{subequations}
\label{themodel}
\begin{eqnarray}
\dfrac{\partial u_n}{\partial t} = D_n \dfrac{\partial^2 u_n}{\partial x^2} + \dfrac{1}{\tau_n} \left(1-u_n-u_h\right) u_n + \dfrac{1}{\tau_{hn}} u_h, \label{themodel1} \\
\dfrac{\partial u_h}{\partial t} = D_h \dfrac{\partial^2 u_h}{\partial x^2} + \dfrac{1}{\tau_h} \left(1-u_n-u_h\right) u_h - \dfrac{1}{\tau_{hn}} u_h, \label{themodel2}
\end{eqnarray}
\end{subequations}
where $D_n, D_h$ are the diffusion coefficients for the normoxic (proliferative) and hypoxic (migratory) phenotypes satisfying $D_h > D_n$ and $\tau_n, \tau_h$ are the doubling times for both phenotypes.
This model is a pair of coupled  Fisher-Kolmogorov equations including a coupling term accounting for the decay of  hypoxic cells  into the normoxic phenotype with a characteristic time $\tau_{hn}$.

\subsection{Global existence and boundedness of model's solutions}\label{Preliminary}
\par
Let us first study  the problem of global existence in time of non-negative solutions of Eqs. \eqref{themodel}  with initial data $ u_n(x,0)=u_n^0(x)$, $u_h(x,0)=u_h^0(x)$ and $x\in\R$, $t>0$. We will   assume, in agreement with their biological meaning, that $D_n,D_h,\tau_n,\tau_h,\tau_{hn} >0$ are finite real parameters.

\begin{thm} For any   $\ (u_n^0,u_h^0)\in L^\infty\cap H^1(\R)^2$
\begin{enumerate}
\item[\rm (i)] There exists a time $T=T(u_n^0,u_h^0)$ such that the parabolic problem \eqref{themodel} has a unique solution $ (u_n(x,t),\  u_h(x,t))$ for $0<t<T .$

\item[\rm (ii)] Any non-negative solution  $\ (u_n(x,t), u_h(x,t))$ of \eqref{themodel}, is a classical solution defined globally in time, and there are constants $M_1,$ $M_2$ such that
\begin{subequations}
\begin{equation}
 0\leq u_n(x,t)\leq M_1:=\max \left\{\|u_n^0\|_\infty,\frac{1}{2}\left( 1+\sqrt{1+4M_2 \tau_n/\tau_{hn}}\right)\right\},  \label{def:Mu}
 \end{equation}
 \begin{equation}
 0\leq u_h(x,t)\leq  M_2:=\max\left\{\|u_h^0\|_\infty,\ 1- \tau_h/\tau_{hn} \right\}.\label{def:Mw}
\end{equation}
\end{subequations}

Moreover, if $\tau_{hn}<\tau_{h}$, then $u_h(x,t)\to 0$ as $t\to\infty.$
\end{enumerate}
\end{thm}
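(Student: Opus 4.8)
The plan is to treat Eqs.~\eqref{themodel} as an abstract semilinear parabolic problem on $X=\bigl(L^\infty\cap H^1(\R)\bigr)^2$, whose linear part is generated by the diagonal operator $\mathrm{diag}(D_n\partial_{xx},D_h\partial_{xx})$ and whose nonlinearity $F(u_n,u_h)$ collects the polynomial reaction terms. For part~(i), since $F$ is quadratic it is locally Lipschitz on bounded subsets of $X$, while the heat semigroups $e^{D_n t\partial_{xx}}$ and $e^{D_h t\partial_{xx}}$ are analytic and smoothing on $\R$. Writing the problem in Duhamel (variation-of-constants) form, I would run a standard contraction-mapping argument on a short interval $[0,T]$ to produce a unique mild solution together with the usual blow-up alternative; this is the routine step.

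The substance is part~(ii), and the key difficulty is that the system is \emph{not cooperative}: one computes $\partial_{u_n}f_h=-u_h/\tau_h\le 0$, so the comparison principle for quasimonotone systems does not apply directly. I would instead exploit the triangular structure by establishing the bounds in a carefully chosen order. On the local existence interval the $L^\infty$ bound makes every reaction coefficient bounded, so writing Eq.~\eqref{themodel2} as $\partial_t u_h=D_h\partial_{xx}u_h+c\,u_h$ with $c$ bounded and applying the scalar maximum principle yields $u_h\ge 0$; feeding $u_h\ge 0$ into Eq.~\eqref{themodel1}, the coupling term $u_h/\tau_{hn}$ becomes a nonnegative source and the same principle gives $u_n\ge 0$.

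Next I would obtain the explicit bounds by scalar comparison. Using $u_n\ge 0$ in Eq.~\eqref{themodel2} gives
\begin{equation}
\partial_t u_h\le D_h\partial_{xx}u_h+u_h\Bigl[\tfrac{1}{\tau_h}(1-u_h)-\tfrac{1}{\tau_{hn}}\Bigr],
\end{equation}
and comparison with the logistic ODE $\dot v=v[\tfrac{1}{\tau_h}(1-v)-\tfrac{1}{\tau_{hn}}]$, whose nonnegative equilibria are $0$ and $1-\tau_h/\tau_{hn}$, yields $u_h\le M_2$. Inserting $0\le u_h\le M_2$ into Eq.~\eqref{themodel1} gives $\partial_t u_n\le D_n\partial_{xx}u_n+\tfrac{1}{\tau_n}(1-u_n)u_n+M_2/\tau_{hn}$, and comparison with $\dot w=\tfrac{1}{\tau_n}(1-w)w+M_2/\tau_{hn}$, whose positive equilibrium is exactly the root $\tfrac12\bigl(1+\sqrt{1+4M_2\tau_n/\tau_{hn}}\bigr)$ of $w^2-w-M_2\tau_n/\tau_{hn}=0$, yields $u_n\le M_1$. (Equivalently, one verifies that the rectangle $[0,M_1]\times[0,M_2]$ is invariant under the reaction flow and invokes the invariant-region theorem for diagonal diffusion.) With these a priori $L^\infty$ bounds in hand, the blow-up alternative forces $T=\infty$, and parabolic (Schauder) bootstrapping upgrades the bounded mild solution to a classical one for $t>0$.

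Finally, for the decay claim when $\tau_{hn}<\tau_h$, I would use that nonnegativity gives $1-u_n-u_h\le 1$, so Eq.~\eqref{themodel2} reduces to $\partial_t u_h\le D_h\partial_{xx}u_h+(\tfrac{1}{\tau_h}-\tfrac{1}{\tau_{hn}})u_h$ with a strictly negative rate; comparison with the linear ODE $\dot v=(\tfrac{1}{\tau_h}-\tfrac{1}{\tau_{hn}})v$ gives $0\le u_h(x,t)\le\|u_h^0\|_\infty\,e^{(1/\tau_h-1/\tau_{hn})t}\to 0$. The main obstacle throughout is organizing the argument so that each scalar comparison invokes only signs already established, thereby circumventing the failure of system comparison caused by the non-cooperative coupling term.
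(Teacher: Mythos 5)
Your proposal is correct and follows essentially the same route as the paper: local existence via the variation-of-constants formula and a fixed-point argument, followed by sequential scalar comparison --- first bounding $u_h$ (using $u_n\ge 0$) by the spatially uniform logistic ODE with equilibrium $1-\tau_h/\tau_{hn}$, then bounding $u_n$ (using $u_h\le M_2$) by the ODE whose positive equilibrium is $\tfrac12\bigl(1+\sqrt{1+4M_2\tau_n/\tau_{hn}}\bigr)$ --- producing exactly the paper's constants $M_1,M_2$. The only minor differences are cosmetic: you prove nonnegativity while the paper takes it as a hypothesis of part (ii), the paper sets up part (i) in $L^2$-based fractional power spaces via Henry's theorem rather than directly in $L^\infty\cap H^1$, and for $\tau_{hn}<\tau_h$ you compare with a linear exponential supersolution (yielding an explicit rate) where the paper argues monotone decay of the logistic comparison function.
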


\begin{proof}. (i) For initial data $\ (u_n^0,u_h^0)\in  H^1(\R)^2  $, the existence and uniqueness of mild solutions in $C([0,T); L^2(\R)^2)$ holds by the variation of constants formula and standard fixed points arguments.
We take $X=L^2(\R)^2$ and $A$ is the closure in $X$ of the differential operator
$A(u_n,u_h):=\left(-D_n \frac{\partial^2 u_n}{\partial x^2},-D_h \frac{\partial^2 u_h}{\partial x^2}\right)
$
in $C_0^\infty (\R)^2.$ Let us consider the fractional power spaces $X^\alpha:=D\left((A+I)^\alpha\right)$, $\alpha\ge 0$  (see \citet{Henry}). In particular $X^{1/2}=H^1(\R)^2$ and $X^1=H^2(\R)^2$.

Let $u=(u_n,u_h)$ , $v=(v_n,v_h)$ and $f(u)=(f_1,f_2),$ be given by
\begin{subequations}
\label{nlin}
\begin{eqnarray}
f_1(u_n,u_h) & := &\left(1-u_n-u_h\right) u_n /\tau_n+ u_h/\tau_{hn},  \\
f_2(u_n,u_h) & := & \left(1-u_n-u_h\right) u_h/\tau_h -  u_h/\tau_{hn},
\end{eqnarray}
\end{subequations}
If $1>\al>1/4$,  we have
\begin{eqnarray*}
    \|f(u)-f(v)\|_{L^2(\R)^2} &\leq & C\left(\|u\|_{L^\infty(\R)^2} +\|v\|_{L^\infty(\R)^2}\right) \|u-v\|_{L^2(\R)^2} \\
     &\leq& C\left(\|u\|_{\alpha} +\|v\|_{\alpha}\right) \|u-v\|_{\alpha},
 \\
    \|f(u)\|_{L^2(\R)^2} &\leq& C\|u\|_{L^\infty(\R)^2} \|u\|_{L^2(\R)^2}
\leq C\|u\|_{\alpha}^2
  \end{eqnarray*}
so the hypothesis of  \citet[Theorem 3.3.3]{Henry} are verified and local existence and uniqueness follows.

(ii)  Next, we will prove global existence for nonnegative solutions.
To get the global bounds, we use standard comparison arguments. For any non-negative solution we have that
\begin{equation}
\dfrac{\partial u_h}{\partial t} \leq D_h \dfrac{\partial^2 u_h}{\partial x^2}+\dfrac{1}{\tau_{h}}(1-u_h)u_h-\dfrac{1}{\tau_{hn}}\, u_h.
\end{equation}
Therefore, since $0\leq u_h^0(x)\leq \|u_h^0\|_\infty$
we obtain $0\leq u_h(x,t)\leq V_h(t)$,
where $V_h(t)$ is the solution of the problem
\begin{equation}
\dfrac{dV_h}{dt}=\dfrac{1}{\tau_{h}}\left[\left(1-\dfrac{\tau_h}{\tau_{hn}}\right)- V_h\right]V_h,\qquad V_h(0)=\|u_h^0\|_\infty.
\end{equation}
Let us choose $u_h^0\gneqq 0$. Obviously,  $V_h(0)> 0.$
If $1\le \tau_h/\tau_{hn}$, then $dV_h/dt\leq 0$, $V_h(t)$ will be a decreasing function on $[0,\infty),$ and moreover $V_h(t)\to 0$ as $t\to\infty.$ Consequently $u_h(x,t)\to 0$ as $t\to\infty.$

Assume now $1> \tau_h/\tau_{hn}.$ If $V_h(0)=1-\tau_h/\tau_{hn}$ then $dV_h/dt= 0$ and therefore $V_h(t)=1-\tau_h/\tau_{hn}$  on $[0,\infty).$
If $V_h(0)<1-\tau_h/\tau_{hn}$ then $dV_h/dt> 0$ and therefore $V_h(t)$ will be an increasing function on $[0,\infty),$ and upper bounded,
consequently $V_h(t)\leq 1-\tau_h/\tau_{hn}$  on $[0,\infty).$
If $V_h(0)>1-\tau_h/\tau_{hn}$ then $dV_h/dt \leq 0$ and therefore $V_h(t)$ will be a decreasing function on $[0,\infty),$ and lower bounded,
consequently $V_h(t)\leq V_h(0)$  on $[0,\infty),$
and \eqref{def:Mw} holds.
Moreover, for any positive initial data, $V_h(t)\to 1-\tau_h/\tau_{hn}$ as $t\to\infty.$
\medskip

For $u_n$ we get
\begin{equation}
\dfrac{\partial u_n}{\partial t} \leq D_n \dfrac{\partial^2 u_n}{\partial x^2}+\dfrac{1}{\tau_{n}}(1-u_n)u_n+\dfrac{M_2}{\tau_{hn}},
\end{equation}
where $M_2$ is given by \eqref{def:Mw}. Therefore, as in the previous case $0\leq u_n(x,t)\leq V_n(t)$,
where $V_n(t)$ is the solution of the problem
\begin{equation}
\dfrac{dV_n}{dt}=\dfrac{1}{\tau_{n}}(1-V_n)V_n+\dfrac{M_2}{\tau_{hn}},\qquad V_n(0)=\|u_n^0\|_\infty.
\end{equation}

Let us choose $u_n^0\gneqq 0$, thus $V_n(0)> 0.$  If $V_n(0)=1/2+\sqrt{1/4+\left(M_2\, \tau_n\right)/\tau_{hn}}$ then $dV_n/dt= 0$ and therefore $V_n(t)=V_n(0)$  on $[0,\infty).$
If $V_n(0)< 1/2 +\sqrt{1/4+\left(M_2\, \tau_n\right)/\tau_{hn}}$ then $dV_n/dt> 0$ and therefore $V_n(t)$ will be an increasing function on $[0,\infty),$ and upper bounded,
consequently $V_n(t)\leq 1/2 +\sqrt{1/4+\left(M_2\, \tau_n\right)/\tau_{hn}} $  on $[0,\infty).$
If $V_n(0)>\frac12 +\sqrt{1/4+\left(M_2 \tau_n\right)\tau_{hn}}  $ then $dV_n/dt\leq 0$ and therefore $V_n(t)$ will be a decreasing function on $[0,\infty),$ and lower bounded,
consequently $V_n(t)\leq V_n(0).$ Moreover, for any positive initial data, $V_n(t)\to 1/2+\sqrt{1/4+\left(M_2\, \tau_n\right)/\tau_{hn}}$ as $t\to\infty.$
Hence,  inequality \eqref{def:Mu} holds, which ends the proof.
\end{proof}

\begin{prop}
Assume that $\tau_{hn}>\tau_{n}.$ The box
\begin{equation}
\Sigma :=\{(u_n,u_h)\ :\ 0\leq u_n,u_h\leq 1\}
\end{equation}
is an invariant box for  \eqref{themodel}, i.e. for any initial data $(u_n^0,u_h^0) \in \Sigma,$
then the solution of \eqref{themodel} $\ (u_n(x,t),u_h(x,t)) \in \Sigma\ $ for any $t\geq 0$.
\end{prop}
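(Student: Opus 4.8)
The plan is to use the standard theory of invariant rectangles for reaction--diffusion systems with diagonal diffusion. Writing \eqref{themodel} as $\partial_t u = D\,\partial_{xx}u+f(u)$ with $u=(u_n,u_h)$, diagonal diffusion matrix $D=\mathrm{diag}(D_n,D_h)$, and reaction field $f=(f_1,f_2)$ given by \eqref{nlin}, the set $\Sigma$ is a rectangle with faces parallel to the coordinate axes. Because the diffusion is diagonal, such a box is an admissible candidate region, and invariance reduces to the subtangency condition: on each face the reaction field must point into the box, i.e. its outer-normal component must be nonpositive. I would verify this face by face.

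Three of the four faces require no assumption beyond nonnegativity. On $u_n=0$ (outer normal $(-1,0)$) the outer-normal component is $-f_1(0,u_h)=-u_h/\tau_{hn}\le 0$; on $u_h=0$ (outer normal $(0,-1)$) it is $-f_2(u_n,0)=0$, so the field is tangent there; and on $u_h=1$ (outer normal $(0,1)$) it is $f_2(u_n,1)=-u_n/\tau_h-1/\tau_{hn}<0$. Each of these is immediate once $u_n,u_h\ge 0$.

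The decisive face --- and the only place the hypothesis enters --- is $u_n=1$ (outer normal $(1,0)$), where the outer-normal component equals $f_1(1,u_h)=-u_h/\tau_n+u_h/\tau_{hn}=u_h(1/\tau_{hn}-1/\tau_n)$. This is nonpositive for every $u_h\ge 0$ exactly when $1/\tau_{hn}\le 1/\tau_n$, i.e. when $\tau_{hn}\ge\tau_n$, which is precisely the standing assumption $\tau_{hn}>\tau_n$. With the subtangency condition thus established on all four faces --- the corners needing no separate argument since $\Sigma$ is the intersection of four half-planes and each linear constraint is respected on its own --- the invariant-region theorem yields that any solution starting in $\Sigma$ remains in $\Sigma$ for all $t\ge 0$. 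I expect this sign condition on $u_n=1$ to be the only genuine obstacle. An equivalent, more elementary route would replace the abstract theorem by scalar comparison: nonnegativity follows from the quasipositivity relations $f_1(0,\cdot)\ge 0$ and $f_2(\cdot,0)\ge 0$, while the upper bounds $u_n,u_h\le 1$ follow by comparison with logistic supersolutions, with $\tau_{hn}\ge\tau_n$ once more controlling the sign of the coupling contribution on the face $u_n=1$.
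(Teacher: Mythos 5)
Your proof is correct and follows essentially the same route as the paper: both verify the inward-pointing (subtangency) sign condition of the reaction field $f$ from \eqref{nlin} on the four faces of $\Sigma$ --- with the hypothesis $\tau_{hn}>\tau_n$ entering only on the face $u_n=1$, where $f_1(1,u_h)=(1/\tau_{hn}-1/\tau_n)u_h\leq 0$ --- and then invoke the invariant-rectangle theorem for reaction--diffusion systems with diagonal diffusion (the paper cites Smoller, Section 14.B). The only difference is presentational: you phrase the conditions via outer normals and add an optional comparison-principle aside, while the paper lists the same four sign inequalities directly.
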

\begin{proof} Taking $f$ as given by Eqs. (\ref{nlin}), assuming that $(u_n,u_h)\in\Sigma ,$ and that $\tau_{hn}>\tau_{n},$ then
\begin{subequations}
\begin{eqnarray}
 & f_1(0,u_h) := u_h/\tau_{hn}\geq 0,\qquad &
f_1(1,u_h) := \left(1/\tau_{hn}-1/\tau_{n}\right)u_h\leq 0, \\
& f_2(u_n,0):=0,\qquad &
f_2(u_n,1):=-u_n/\tau_{h}-1/\tau_{hn}\leq 0.
\end{eqnarray}
\end{subequations}
Therefore, $\Sigma $ is an invariant box, see \citet[Section 14.B, p. 198]{Smoller}.
\end{proof}

\subsection{Parameter estimation}\label{Parameter}
\par

Brain tissue has a complex structure with spatial inhomogeneities in the parameter values (e.g. different propagation speeds in white and gray matter) and anisotropies (e.g. on the diffusion tensor with preferential propagation directions along white matter tracts). In order to simplify the analysis and focus on the essentials of the phenomena to be studied we have chosen to study the model in one spatial dimension and in isotropic media.

In Eqs. (\ref{themodel}), the densities $u_n$, $u_h$ are taken in units of the maximal tissue density, typically around   $10^3$ cell/cm.  Although we will explore different parameter regimes, the normoxic cell doubling time will be taken to be $\tau_n \sim 24 $ h in agreement with typical cell doubling times in vitro (see e.g. \citet{doubling}) and the hypoxic one $\tau_h \sim 48$ h. The diffusion coefficients for normoxic and hypoxic cells will be taken             to be around
$D_n = 6.6 \cdot 10^{-12}$ cm$^2$/s and $D_h$ around an order of magnitude larger \citep{Alicia2012}. The parameter
$\tau_{hn}$, corresponding to the phenotype switch time is harder to estimate since it corresponds to the recovery of the less motile proliferative phenotype
 under conditions of good oxygenation. The time of the opposite transition $\tau_{nh}$ corresponding to the response time to hypoxia, despite some variability \citep{Chi2006}, is a very fast time \citep{Jewell2001} corresponding to the fast activation of the cellular adaptive responses to match oxygen supply with metabolic, bioenergetic, and redox demands \citep{Majmundar2010}. Normal cells have the capability of restoring their normal behavior once a hypoxic stimulus has finished. However, cancer cells may become more aggressive after cycles of hypoxia and reoxigenation \citep{Bashkara2012}, and in some aspects the process may become at some moment irreversible leading to the so-called Warburg phenotype \citep{Koppenol,Berta2012}.  Typically short cycles of about 30 min of hypoxia and reoxigenation lead to the same (or even worse) outcome than under chronic hypoxia \citep{FEBS} leading to the conclusion that $\tau_{hn}$ is larger than this value and thus $\tau_{hn} \gg \tau_{nh}$. Different works point out to a normalization of the response to hypoxia between 48 h and 72 h. In vivo analysis of HIF-1$\alpha$ stabilization in well oxygenated tumor areas \citep{Zagzag2000}
 provides support for long normalization times $\tau_{hn}$.

 To solve Eqs. \eqref{themodel} numerically we have used a standard finite difference method of second order in time and space with zero boundary conditions on the boundaries of the integration domain. We have used large integration domains and cross-checked our results for different sizes to avoid boundary effects.

 \section{Numerical results}
 \label{results}

\subsection{Hypoxic events lead to fast glioma progression}

\begin{figure}
\begin{center}
\includegraphics[width=0.60\textwidth]{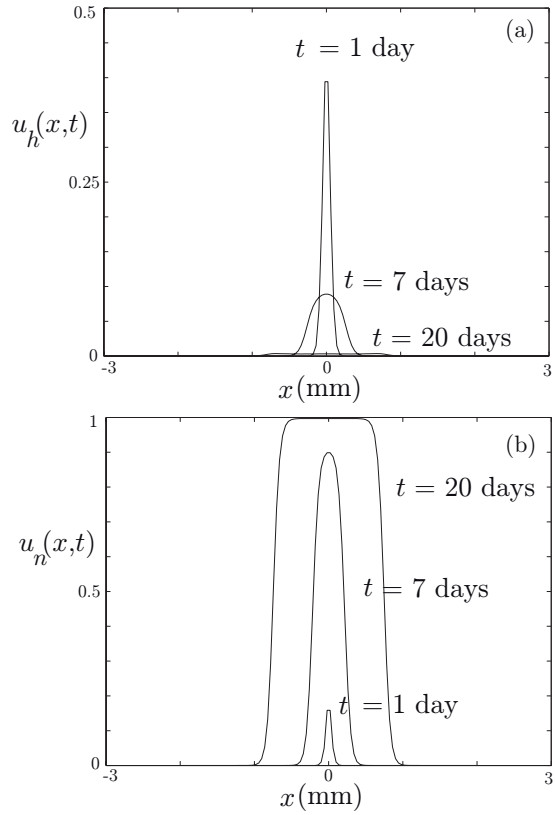}
\caption{Snapshots of the evolution of the (a) hypoxic  $u_h(x,t)$ and (b) normoxic $u_n(x,t)$ densities solving Eq. (\ref{themodel}) for parameter values $D_n = 6.6\times 10^{-12}$ cm$^2$s$^{-1}$, $D_h = 1.32\times 10^{-10}$ cm$^2$s$^{-1}$, $\tau_n =$ 24 h, $\tau_h=$ 48 h, $\tau_{hn} = $ 96 h and initial data $u_n(x,0) = 0$, $u_h(x,0) = \left(1 - 800 x^2\right)_+$, $x$ being measured in mm. Shownare the results for $t=1, 7$ and 20 days showing the decay of the hypoxic cell density and the formation of a front of normoxic cells. \label{snapshots}}
\end{center}
\end{figure}

An example of the phenomenon to be described here is presented in Figs. \ref{snapshots} and  \ref{prima}.
There, an initial distribution of hypoxic tumor cells at $t=0$, e.g. due to a transient vaso-occlusive event, is placed in a well oxygenated environment.
One might expect naively that after a transient of about a few times $\tau_{hn}$, hypoxic cells would dissapear and then the front speed would tend asymptotically to that of the normoxic phenotype, given by the minimal FK speed
\begin{subequations}
\begin{equation}\label{def:cn*}
c^*_n = 2 \sqrt{D_n/\tau_{n}}
\end{equation}
 Although the hypoxic cell density decays and a front of normoxic cells is generated (see Fig. \ref{snapshots}), the normoxic front propagation speed is not the minimal FK speed. In Fig. \ref{prima} we show pseudocolor spatio-temporal plots of $u_n(x,t)$ (Fig. \ref{prima}(a)) and the front speed (Fig. \ref{prima}(b)). The internal and external white lines show (respectively) the predicted evolution of purely normoxic and hypoxic initial data under no phenotype changes (i.e., the limit $\tau_{hn} \rightarrow \infty$).

\begin{figure}
\begin{center}
\includegraphics[width=0.72\textwidth]{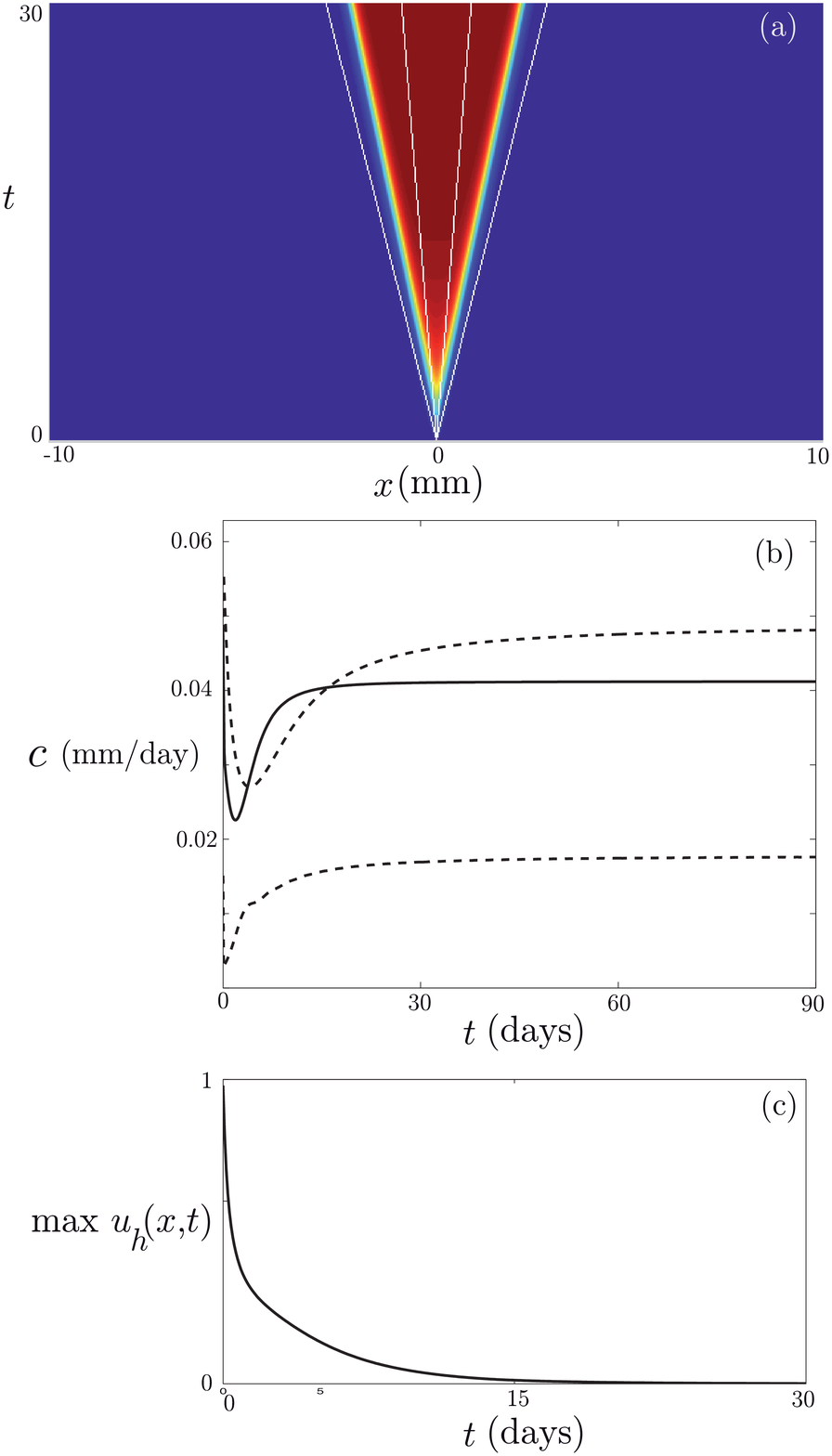}
\caption{Evolution of the normoxic  $u_n(x,t)$ and hypoxic $u_h(x,t)$ cell densities solving Eq. (\ref{themodel}) for parameter values $D_n = 6.6\times 10^{-12}$ cm$^2$s$^{-1}$, $D_h = 1.32\times 10^{-10}$ cm$^2$s$^{-1}$, $\tau_n =$ 24 h, $\tau_h=$ 48 h, $\tau_{hn} = $ 96 h and localized initial data of the form $u_n(x,0) = 0$, $u_h(x,0) = \left(1 - 800 x^2\right)_+$, $x$ being measured in mm. (a) Pseudocolor plots of the amplitude of the normoxic cell density $u_n(x,t)$. The white lines are calculated using the propagation speed of the hypoxic and normoxic fronts in the framework of the FK equation (b) Evolution of the normoxic wavefront speed calculated as $v(t) = \dfrac{d}{dt} \left[2\left( \int \left| x - X(t)\right| u_n(x,t) dx\right)/\left(\int u_n(x,t) dx\right)\right] $, where $X(t) = \int x u_n(x,t) dx$. Shownare for comparison the speeds (calculated using the same formula) for purely hypoxic (upper dashed line) and purely normoxic cells (dashed lower line) without any type of transitions allowed between them. The long time scales shown guarantee that asymptotically the propagation of the normoxic front under an initial hypoxic event reaches an asymptotic speed essentially larger than the purely normoxic front.  (c) Decay of the amplitude of the hypoxic component $\max_x u_h(x,t)$, as a function of time.\label{prima}}
\end{center}
\end{figure}

From the results shown in  Fig. \ref{prima} it is clear that the speed of the propagating front of normoxic cells \emph{is not $c^*_n$ but given by a substantially larger value even when the hypoxic cell population has become extinct} (cf. Fig. \ref{prima}(c)). The simulation runs for long times to show that, although the hypoxic cell amplitude decays in a few days, its effect on the normoxic front propagation speed persists and provides a sustained front acceleration that is still present (though minimal) after one month of the initial ($t=0^-$) hypoxic event.

\begin{figure}
\begin{center}
\includegraphics[width=0.72\textwidth]{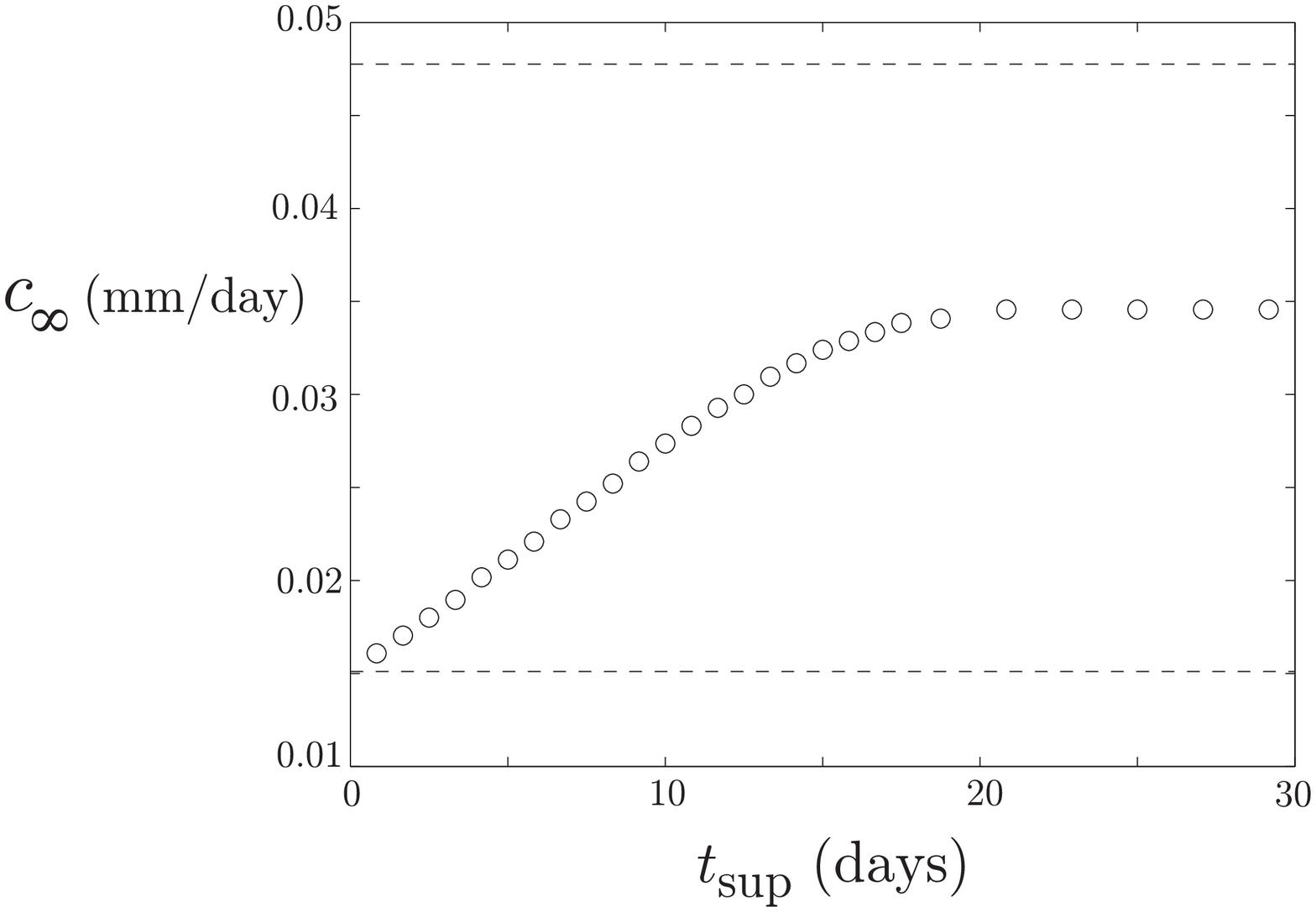}
\caption{(a) Effect on the final asymptotic speed at $t=30$ days of the suppression of the hypoxic component for different times $t_{sup}$. The horizontal dashed lines denote the normoxic $c_n$ (lower line) and hypoxic $c_h$ (upper line) phenotype speeds respectively. (b) Decay of the amplitude of the hypoxic cell subpopulation. Parameter values are $D_n = 6.6\times 10^{-12}$ cm$^2$s$^{-1}$, $D_h = 1.32\times 10^{-10}$ cm$^2$s$^{-1}$, $\tau_n =$ 24 h, $\tau_h=$ 48 h, $\tau_{hn} = $ 48 h and initial data as in Fig. \ref{prima}. \label{tertia}}
\end{center}
\end{figure}

\subsection{Waves of hypoxic cells drive the evolution of the normoxic component}

To confirm that the acceleration of the normoxic phenotype front is due to the presence of the hypoxic component we have performed a second set of numerical experiments. In these simulations we have studied the effect of the suppression of the hypoxic component at a given time on the final propagation speed. Thus we have solved Eqs. (\ref{themodel}) but setting artificially $u_h(x,t_{sup}) = 0$ for different values of $t_{sup}$.

The results are summarized in Fig. \ref{tertia}. Note that for the simulations of Fig. \ref{tertia} we have chosen a substantially smaller $\tau_{hn}$ leading to a faster decay of the amplitude of the hypoxic component.
It is remarkable that the asymptotic velocity increases with a longer presence of hypoxic cells even when the hypoxic wave amplitude is very small,
 it continues accelerating the front of normoxic phenotype cells. This behavior is somehow counterintuitive implying that the final normoxic wave speed more than doubles the expected speed
 $c_n$.

To understand this phenomenon let us note that initially the hypoxic wave propagates with speed
\begin{equation}\label{def:ch*}
c^*_h = 2 \sqrt{D_h/\tau_{h}}
 \end{equation}
 \end{subequations}
with $c^*_h > c^*_n$ while at the same time decreases its amplitude providing a seeding of hypoxic cells that later change their phenotype to normoxic thus providing an extra source for normoxic cells. It is interesting that the phenomenon is mediated by a small amplitude wave of hypoxic cells.
This leads to a substantially faster growth of the normoxic component when the wave reaches a specific location. Thus, the outcome of this dynamical phenomenon is that \emph{the effects of an initial hypoxic event may have a substantial influence on the invasion wave speed for very long timescales}.

One may wonder if the outcome of our simulations may be due to the fact that the hypoxic initial distribution is not surrounded by normoxic tumor cells as it may happen in real tissue. To rule out this possibility we have run simulations with initial data localized around a vessel but surrounded by normoxic cells.
An example of our results is shown in Fig. \ref{hyphop} ruling out the influence this choice on the asymptotic speed.

\begin{figure}
\begin{center}
\includegraphics[width=0.72\textwidth]{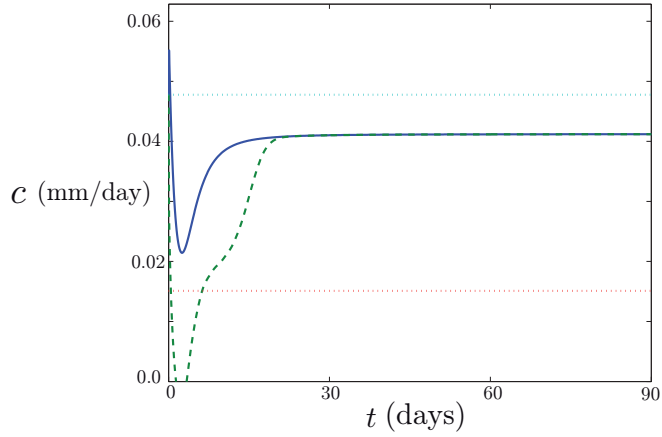}
\caption{Tumor growth speeds  for parameter values $D_n = 6.6\times 10^{-12}$ cm$^2$s$^{-1}$, $D_h = 1.32\times 10^{-10}$ cm$^2$s$^{-1}$, $\tau_n =$ 24 h, $\tau_h=$ 48 h, $\tau_{hn} = $ 96 h and different initial data. The solid line corresponds to the same situation as in Fig. \ref{prima}, i.e. initial data of the form $u_n(x,0) = 0$, $u_h(x,0) = \left(1 - 800 x^2\right)_+$, $x$ being measured in mm. The dashed lines correspond to the evolution of initial data of the form $u_n(x,0) = 0.72\left[1-200(x-0.2)^2\right]_+ + 0.72\left[1-200(x+0.2)^2\right]_+ $, $u_h(x,0) = \left(1 - 800 x^2\right)_+$ corresponding to broad normoxic tumor cell densities around vessels adjacent to the one failing and generating the localized hypoxic density $u_h$. Shownare for comparison the asymptotic speeds of purely hypoxic (upper dotted line) and purely normoxic cells (lower dotted line) calculated using the propagation speed of the hypoxic and normoxic fronts in the framework of the standard scalar FK equation.
\label{hyphop}}
\end{center}
\end{figure}

It is important to emphasize that the phenomenon described in this section occurs in broad parameter regions including the biologically relevant ranges of parameters. As an example, in Fig. \ref{vartaun} we plot the asymptotic speed as a function of the normoxic doubling time ranging from the typical in vitro values of $\tau_n = 24$ h to larger values closer to in-vivo doubling time estimates \citep{Wang,Kirkby}

\begin{figure}
\begin{center}
\includegraphics[width=0.72\textwidth]{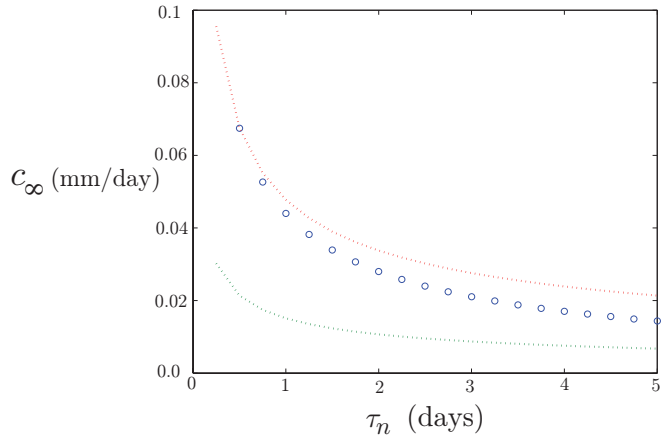}
\caption{Asymptotic tumor growth speed as a function of the normoxic proliferation time $\tau_n$. The hypoxic proliferation time is chosen to be twice the value of $\tau_n$, i.e. $\tau_h = 2\tau_n$.
The dashed lines denote the normoxic $c^*_n$ (lower dashed line) and hypoxic $c^*_h$ (upper line) phenotype asymptotic speeds respectively. Parameter values are $D_n = 6.6\times 10^{-12}$ cm$^2$s$^{-1}$, $D_h = 1.32\times 10^{-10}$ cm$^2$s$^{-1}$, $\tau_{hn} = $ 7 days and initial data as in Fig. \ref{prima}. \label{vartaun}}
\end{center}
\end{figure}

\subsection{Propagation regimes as a function of $\tau_{hn}$}

The parameter $\tau_{hn}$ describes complicated biological processes since it corresponds to the phenotype change of hypoxic cells under oxic conditions. As described in Sec. \ref{model}, this parameter can range from about an hour in normal cells to many days in transformed cells. One would expect that very fast decaying hypoxic phenotypes would lead to a smaller acceleration of the normoxic wave of invasion. In the opposite limit it is to expected that the asymptotic front velocity $c_{\infty}$ satisfies $\lim_{\tau_{hn} \rightarrow \infty} c_{\infty}(\tau_{hn}) = c^*_h$.

\begin{figure}[h]
\begin{center}
\includegraphics[width=0.72\textwidth]{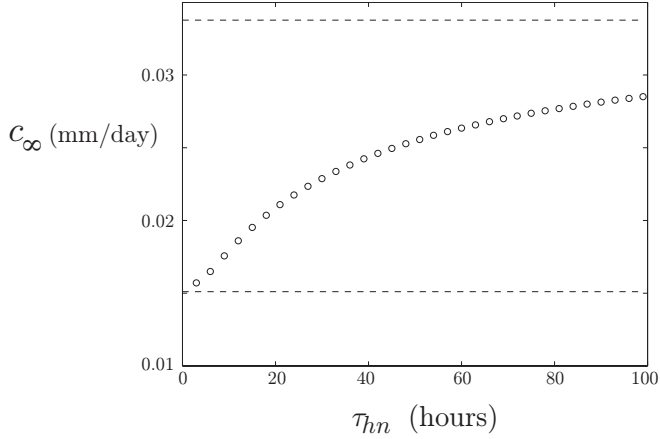}
\caption{Asymptotic speed $c_{\infty}$ of the solutions of Eqs. (\ref{themodel}) after one month  as a function of the switch time $\tau_{hn}$.  Parameter values $\tau_n = 24$ h , $\tau_h = 48 $h, $D_n = 6.6 \times 10^{-12}$ cm$^2$s$^{-1}$, $D_h =$ 6.6 $\times 10^{-11}$ cm$^2$s$^{-1}$ and initial data as in Fig. \ref{prima}. The dashed lines represent the minimal asymptotic speeds of hypoxic  $c^*_h$ (upper line) and normoxic $c^*_n$ invasion waves (lower line). \label{cuarta}}
\end{center}
\end{figure}

In Fig. \ref{cuarta} we explore the behavior of the asymptotic velocity of the normoxic wave one month after the hypoxic event for different values of the switch parameter $\tau_{hn}$. It can be seen how larger recovery times lead to asymptotic speeds closer to $c^*_h$.

\subsection{Localized finite-amplitude wavepackets of hypoxic cells}\label{goandgrow}
\par

It is very interesting that despite the instability of the hypoxic cells there are parameter regimes in which the hypoxic cells do not form an evanescent wave but instead a finite amplitude wavepacket of cells with the hypoxic phenotype persists leading the advance of the tumor.

\begin{figure}[h!]
\begin{center}
\includegraphics[width=0.72\textwidth]{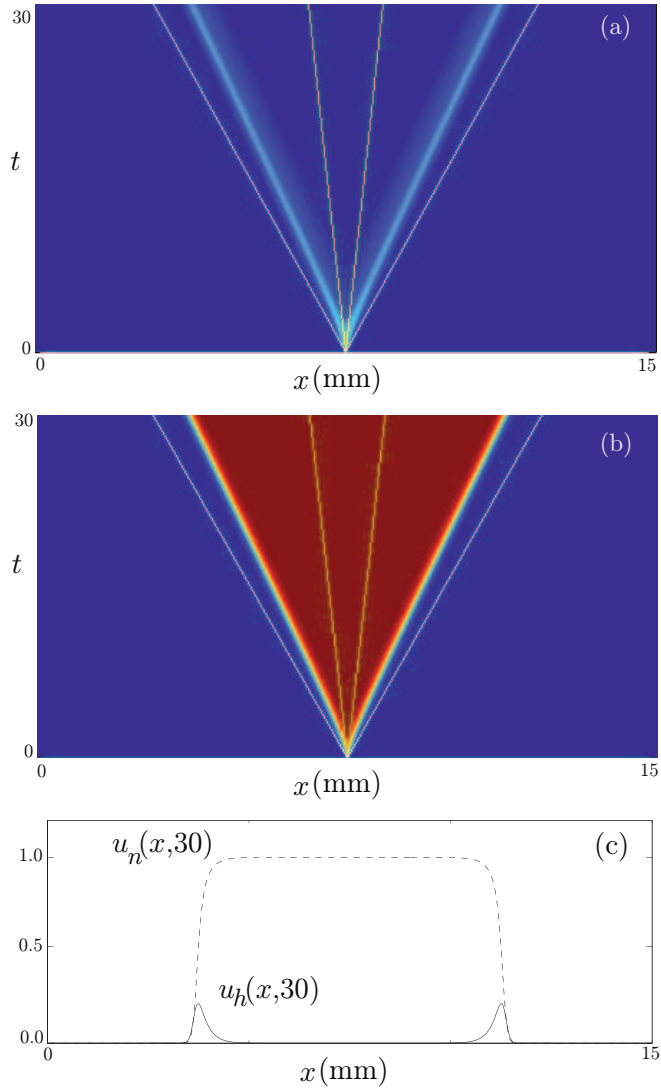}
\caption{Evolution of the normoxic  $u_n(x,t)$ and hypoxic $u_h(x,t)$ cell densities solving Eq. (\ref{themodel}) for parameter values $D_n = 6.6\times 10^{-12}$ cm$^2$s$^{-1}$, $D_h = 1.32\times 10^{-10}$ cm$^2$s$^{-1}$, $\tau_n =$ 24 h, $\tau_h=$ 18 h, $\tau_{hn} = $ 72 h and localized initial data of the form $u_n(x,0) = 0$, $u_h(x,0) = \left(1 - 800 x^2\right)_+$, $x$ being measured in mm. (a,b) Pseudocolor plots of the amplitude of the hypoxic (a) and normoxic (b) cell densities $u_h(x,t)$ and $u_n(x,t)$ respectively. The lines shown  are calculated using the propagation speed of the hypoxic and normoxic fronts in the framework of the FK model (c) Final profiles of the normoxic and hypoxic cell densities after $t=30$ days showing the coupled front and bright soliton. A small amplitude pulse of hypoxic cells leads the propagation of the tumor despite the instability of the zero solution. \label{ultima}}
\end{center}
\end{figure}

Examples of typical results are summarized in Fig. \ref{ultima}. As it can be seen in the pseudocolor plot of Fig. \ref{ultima}(a) and in Fig. \ref{ultima}(c), the hypoxic cell density does not vanish with time since their fast proliferation rates makes the zero solution unstable when the hypoxic wave hits the normal tissue areas. However, the hypoxic solution is unstable  and decays into the normoxic one leading to a bright soliton of hypoxic cells (cf. Fig. \ref{ultima}(c)) coupled to the front corresponding to the normoxic phenotype (Fig. \ref{ultima}(b)). The speed of the resulting wave of hypoxic cells becomes asymptotically close to $c^*_h$. From the mathematical point of view these solutions correspond to stable homoclinic orbits of the system connecting with the equilibrium point $u_h = 0$.

These families of solutions may correspond to the behavior of very aggressive phenotypes driven by hypoxia with both enhanced mobility and proliferation. In general, it is found in the context of different families of models that more aggressive phenotypes tend to be the drivers of invasion \citep{Anderson1,Anderson2}. What it is interesting in the result described here is the fact that even an unstable phenotype such as the hypoxic one in a well oxygenated environment may lead to a robust non-vanishing wave.

\section{Theoretical results}\label{Theoretical}
\par

In the previous sections we have described several phenomena involving the existence of both decaying and localized travelling waves of hypoxic cells. In this section we wish to complement our numerical results with some theory on the existence of localized finite-amplitude waves of the system studied. 


Let $(u_n,u_h)$ be a solution of \eqref{themodel}. Under some assumptions on the parameters to be precised later, Eqs. \eqref{themodel}  admit  {\it traveling wave}  solutions of the form $u_n(x,t)=v_n(x-ct), u_h(x,t)=v_h(x-ct),$
for some values of $c>0.$

Let $\xi:=x-ct$. The functions $(v_n,v_h)=(v_n(\xi),v_h(\xi))$ satisfy the equations
\begin{subequations}\label{eq:ode1}
\begin{eqnarray}
    -D_n v_n''-cv_n' & =(1-v_n-v_h)v_n/\tau_n+ v_h/\tau_{hn},&\\
    -D_h v_h''-cv_h' & =(1-v_n-v_h)v_h/\tau_h-v_h/\tau_{hn},&\qq{for} \xi\in\R.
\end{eqnarray}
\end{subequations}
where prime denotes differentiation with respect to $\xi.$

We will look for solutions $\big(v_n^*,v_h^*\big)$ such that
\begin{subequations}\label{bc:+inf}
\begin{eqnarray}
\big(v_n^*,(v_n^*)',v_h^*,(v_h^*)'\big)\to (0,0,0,0) & \qq{as}\xi\to +\infty \\
\label{bc:-inf}
\big(v_n^*,(v_n^*)',v_h^*,(v_h^*)'\big)\to (1,0,0,0) & \qq{as}\xi\to -\infty,
\end{eqnarray}
\end{subequations}
with $v_n^*,v_h^*>0.$

Obviously,  choosing $(v_n,v_h)=(v_K,0)$  where $v_K$ denotes the classical  KPP solution to the Fisher problem, see \citep{Fisher, KPP}, we get a solution of \eqref{eq:ode1}.
It is well known that
whenever $c>c_{n}^*$ (or $c<-c_{n}^*$), see \eqref{def:cn*} for a definition of $c_{n}^*$,  there exists a travelling wave $v_{K},$ solving
\begin{equation}\label{eq:KPP}
-D_n v_{K}''-cv_{K}'=(1-v_{K})v_{K}/\tau_n,\qq{for} \xi\in\R,
\end{equation}
and satisfying $0<v_K<1,$
\begin{eqnarray}\label{bc:vK:+-inf}
\begin{cases}
v_K(\xi)\to 0\ \text{  as }\xi\to +\infty,\\
v_K(\xi)\to 1\ \text{  as }\xi\to -\infty,\\
v'_K(\xi)\to 0\ \text{  as }|\xi|\to \infty.
\end{cases}
\end{eqnarray}
This solution corresponds to the heteroclinic orbit of the associated first order ordinary differential system,  connecting the critical points $(1,0)$ with $(0,0).$

Here we will consider solutions with $v_h>0$ and analyze the associated first order ordinary differential system, see \eqref{eq:ode3}, and its critical points. We refer to the Appendix \ref{ap1} for an study of the ODE,  the equilibria and their corresponding stability, see  Theorem \ref{th:eq:00}, and Theorem \ref{th:eq:10} where we prove that $(0,0,0,0)$ is an stable equilibrium whenever $\tau_{hn}>\tau_{h}$, and $(1,0,0,0)$ is a saddle point with a 2-dimensional unstable manifold $E^u$ (and also a 2-dimensional stable manifold $E^s$).


\begin{thm}\label{thm:heterocl} Let us denote
by $V=(V_1,V_2,V_3,V_4)=(v_n,v_n',v_h,v_h')$,
$\ V=V(\xi),$ a solution of the  ordinary differential equation \eqref{eq:ode3}.
Assume that $\tau_{hn}>\tau_{h} $ and $c>\max\{c_n^*,c_{hn}^*\},$ where 
\begin{equation}\label{def:chn*}
c_{hn}^*:=2\sqrt{D_h\left(1/\tau_{h}-1/\tau_{hn}\right)}.
\end{equation} 
 Let  $E^u$  be the 2-dimensional unstable manifold of the equilibrium $(1,0,0,0)$ characterized in Theorem \ref{th:eq:10}. Then, for any $V_0\in E^u$ there is a trajectory $\ V=V(\xi;V_0)\to (0,0,0,0)$ as $\xi\to +\infty,$ and $\ V=V(\xi;V_0)\to (1,0,0,0)$ as $\xi\to -\infty.$
\end{thm}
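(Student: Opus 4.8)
The plan is to reduce everything to the forward limit, since the backward limit is automatic. Writing \eqref{eq:ode1} as the first-order system \eqref{eq:ode3} for $V=(v_n,v_n',v_h,v_h')$, I first read off the linearizations used in Theorems \ref{th:eq:00} and \ref{th:eq:10}. At the origin the $v_h$-block has characteristic polynomial $D_h\lambda^2+c\lambda+(1/\tau_h-1/\tau_{hn})$; under $\tau_{hn}>\tau_h$ its constant term is positive, and $c>c_{hn}^*$ makes both roots real and negative, while $c>c_n^*$ does the same for the $v_n$-block $D_n\lambda^2+c\lambda+1/\tau_n$. Hence $(0,0,0,0)$ is a stable node with an open basin. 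At $(1,0,0,0)$ the two blocks have negative constant terms, giving one positive and one negative root apiece, so it is a saddle with the $2$-dimensional $E^u$. For $V_0\in E^u$ the relation $V(\xi;V_0)\to(1,0,0,0)$ as $\xi\to-\infty$ is just the definition of the unstable manifold, so the task is the forward limit $V(\xi;V_0)\to(0,0,0,0)$.

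Next I would build a bounded, positively invariant set $\mathcal R\subset\R^4$ lying over the nonnegative quadrant $\{v_n\ge0,\ v_h\ge0\}$ (motivated by the invariant box $\Sigma$ of the Proposition) and verify that the branches of $E^u$ enter and stay in $\mathcal R$. The payoff is that inside the nonnegative quadrant \eqref{eq:ode3} has exactly two equilibria: besides $(0,0,0,0)$ and $(1,0,0,0)$, the only other possibility requires $1-v_n-v_h=\tau_h/\tau_{hn}$, which forces $v_h=-(\tau_h/\tau_n)v_n<0$ and is excluded. Thus on $\mathcal R$ the node is the sole equilibrium with $v_h=0$ available as a forward limit, and any orbit trapped in $\mathcal R$ has a compact $\omega$-limit set made of full orbits in $\mathcal R$.

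To force the $\omega$-limit to be $\{(0,0,0,0)\}$ I would first show $v_h\to0$. Since $v_n,v_h\ge0$, the coefficient $g(\xi):=(1-v_n-v_h)/\tau_h-1/\tau_{hn}$ in the $v_h$-equation obeys $g\le a:=1/\tau_h-1/\tau_{hn}$, and the substitution $v_h=e^{-c\xi/(2D_h)}\phi$ removes the advection; the threshold $c>c_{hn}^*=2\sqrt{D_h\,a}$ is exactly what makes the reduced operator non-oscillatory, so the only behaviour compatible with staying bounded in $\mathcal R$ and with the absence of positive equilibria is decay, giving $v_h(\xi)\to0$. With $v_h\to0$ the $v_n$-equation becomes an asymptotically autonomous perturbation of the scalar KPP equation \eqref{eq:KPP}; since $c>c_n^*$, the rest state $v_n=0$ is the non-oscillatory stable state of the limit equation, and the theory of asymptotically autonomous systems (or a direct sub/supersolution comparison with the KPP front) yields $v_n(\xi)\to0$. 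Once $(v_n,v_h)$ is small the orbit has entered the open basin of the node and converges.

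To upgrade this from a single orbit to every $V_0\in E^u$, I would argue that the set of $V_0$ whose forward orbit reaches the node is nonempty (it contains the classical KPP connection $(v_K,0)$), open (open basin), and closed in a fundamental domain of $E^u$ (no orbit escapes $\mathcal R$, and no homoclinic loop to the saddle forms), and then invoke connectedness of the punctured unstable manifold. The main obstacle is precisely the step $v_h\to0$: the system \eqref{themodel} is neither cooperative ($\partial f_2/\partial u_n=-u_h/\tau_h\le0$) nor variational, so no monotone-systems comparison or gradient Lyapunov function is available, and the scalar inequality for $v_h$ alone admits non-decaying nonnegative subsolutions (for instance, positive constants). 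The decay must therefore be extracted from the global dynamics inside $\mathcal R$, excluding periodic and other nontrivial bounded invariant sets with $v_h>0$, and it is here that the hypotheses $\tau_{hn}>\tau_h$ and $c>c_{hn}^*$ do the essential work.
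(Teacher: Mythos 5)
Your preliminary reductions agree with the paper: the backward limit is indeed just the definition of $E^u$, the eigenvalue analysis matches Theorems \ref{th:eq:00} and \ref{th:eq:10}, and your count of equilibria in the nonnegative quadrant is correct. But the core of the argument is a plan whose two load-bearing steps are exactly the ones left open, and one of them cannot work as stated. First, the bounded positively invariant region $\mathcal R$ is never constructed, and the motivation you cite does not apply: the invariant box $\Sigma$ of Section \ref{Preliminary} is a statement about the parabolic system \eqref{themodel} (via the invariant-region theory for reaction--diffusion systems), and it does not transfer to the traveling-wave ODE \eqref{eq:ode3}, a second-order system in which neither $\{v_n\ge 0,\,v_h\ge 0\}$ nor any obvious bounded set is positively invariant, since the derivative components $v_n',v_h'$ need separate control. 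Worse, by Theorem \ref{th:eq:10} the tangent space of $E^u$ at $(1,0,0,0)$ contains the direction $-(0,0,1,\widehat{\la}_{2+})$, so $E^u$ has branches that leave $(1,0,0,0)$ with $v_h<0$; the theorem claims convergence for \emph{every} $V_0\in E^u$, so any strategy confined to the nonnegative quadrant cannot prove the full statement. Second, you concede that the decisive step $v_h\to 0$ ``must be extracted from the global dynamics inside $\mathcal R$'' --- but that extraction \emph{is} the theorem: in a four-dimensional phase space there is no Poincar\'e--Bendixson theorem, so ruling out periodic orbits, homoclinic loops and other bounded invariant sets with $v_h>0$ is not a routine matter, and your nonempty/open/closed connectedness argument on $E^u$ inherits the same gap (closedness is asserted, not proved).

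The paper avoids all of this with one change of variables your proposal never exploits: subtract the KPP wave. Setting $Z=V-(v_K,v_K',0,0)$, $Z$ solves the non-autonomous system \eqref{eq:ode4}; since every trajectory on $E^u$ and the KPP wave both tend to $(1,0,0,0)$ as $\xi\to-\infty$ (see \eqref{bc:vK:+-inf}), $Z$ is arbitrarily small for $\xi$ negative enough. Proposition \ref{prop:asymptotic}(i) then applies: under $\tau_{hn}>\tau_h$ and $c>\max\{c_n^*,c_{hn}^*\}$ all eigenvalues \eqref{def:eig:z} of $A$ are negative reals, so the trivial solution of \eqref{eq:ode4} is asymptotically stable and $Z\to 0$ as $\xi\to+\infty$; since $v_K\to 0$ there as well, $V\to(0,0,0,0)$. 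In other words, the hypotheses do their work through the spectrum of the linearization about the KPP wave at $+\infty$, not through invariant regions, positivity, or $\omega$-limit sets --- which is also why the paper's argument covers all of $E^u$, including the branches with $v_h<0$ that your construction cannot reach. If you want to salvage your route, you must either prove the $v_h\to 0$ step from the global dynamics (the hard open part) or adopt the subtraction trick, which reduces the whole theorem to the local stability statement already isolated in Proposition \ref{prop:asymptotic}.
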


The proof will be presented later, in p. \pageref{proof:thm:heterocl}. Using this result we can move to the following Theorem, that  proves the existence of a 2-dimensional manifold of initial data of heteroclinic trajectories  of the associated ODE \eqref{eq:ode3},  connecting their corresponding critical points $(1,0,0,0)$ with $(0,0,0,0).$ This 2-dimensional manifold of initial data of heteroclinic trajectories, intersected with $[0,1]\times\R\times[0,1]\times\R$, is  a  set of initial data of travelling waves. The idea of the proof is to choose a point on the unstable 2-dimensional manifold $E^u$ of the equilibrium $(1,0,0,0)$ and prove that, for some  range of speeds,  the trajectory falls down into the basin of attraction of $(0,0,0,0)$.

\begin{thm}\label{thm:existence}
Assume that $\tau_{hn}>\tau_{h}$ and $c>\max\{c_n^*,c_{hn}^*\},$ where $c_n^*,c_{hn}^*$ are defined in \eqref{def:cn*}, \eqref{def:chn*} respectively.
Let  $E^u$  be the real 2-dimensional unstable manifold of the equilibrium $(1,0,0,0)$ characterized in Theorem \ref{th:eq:10}.

\smallskip

Then, for any $V_0\in E^u \cap [0,1]\times\R\times[0,1]\times\R\neq\emptyset,$ there exists a travelling wave
$\big(v_n^*, v_h^*\big)$  solving \eqref{eq:ode1}, satisfying \eqref{bc:+inf}, and such that 
$$\big(v_n^*,(v_n^*)', v_h^*,(v_h^*)'\big) \left|_{\xi=\xi_0}\right.=V_0\qq{for some} \xi_0\in\R.$$
\end{thm}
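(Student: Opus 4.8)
The plan is to combine Theorem~\ref{thm:heterocl} with the invariance/confinement structure of the problem. Theorem~\ref{thm:heterocl} already provides, for every $V_0 \in E^u$, a full heteroclinic trajectory $V(\xi;V_0)$ of the ODE~\eqref{eq:ode3} running from $(1,0,0,0)$ at $\xi=-\infty$ to $(0,0,0,0)$ at $\xi=+\infty$. So the existence of the connecting orbit is not what remains to be shown; what remains is to identify those orbits that are \emph{genuine travelling waves} in the biologically admissible sense, namely those whose profile components $(v_n^*, v_h^*)$ stay in $[0,1]$ (so that they are densities) and in particular are nonnegative, and to exhibit an explicit $\xi_0$ where the trajectory meets the prescribed initial datum $V_0$.

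First I would fix $V_0 \in E^u \cap \big([0,1]\times\R\times[0,1]\times\R\big)$, assuming this set is nonempty as in the hypothesis, and invoke Theorem~\ref{thm:heterocl} to obtain the heteroclinic trajectory $V(\xi;V_0)$. By the (local) flow property of the ODE I may reparametrize so that $V(\xi_0;V_0)=V_0$ for a chosen $\xi_0\in\R$: since $V_0$ is a point on the orbit, setting $\xi_0$ equal to its time-coordinate along the trajectory gives exactly the matching condition $\big(v_n^*,(v_n^*)',v_h^*,(v_h^*)'\big)|_{\xi=\xi_0}=V_0$. The limits as $\xi\to\pm\infty$ are inherited directly from Theorem~\ref{thm:heterocl}, so the boundary conditions~\eqref{bc:+inf} hold automatically.

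The substantive step is positivity and confinement: I must show $v_n^*,v_h^*>0$ (or at least $\ge 0$) and bounded by $1$ along the whole orbit. Here I would lean on the invariant-box Proposition for the box $\Sigma$, together with the phase-plane reading of $E^u$ used in Theorem~\ref{th:eq:10}. The key observation is that the unstable manifold $E^u$ of the saddle $(1,0,0,0)$ leaves the equilibrium along eigendirections that point \emph{into} the admissible region $[0,1]^2$ (in the $(v_n,v_h)$ components), so that for $V_0$ chosen in $E^u\cap([0,1]\times\R\times[0,1]\times\R)$ the forward trajectory starts inside the confining box. Since the nonlinearity $f$ in~\eqref{nlin} points inward on the faces $\{v_n=0\}$, $\{v_n=1\}$, $\{v_h=0\}$, $\{v_h=1\}$ under the standing hypothesis $\tau_{hn}>\tau_h$ (one checks the sign conditions exactly as in the Proposition, noting $\tau_{hn}>\tau_h$ forces $1-\tau_h/\tau_{hn}>0$), the trajectory cannot exit through these faces; hence the profile components remain in $[0,1]$ and the strict positivity $v_n^*,v_h^*>0$ follows away from the endpoints by the fact that the orbit is a nontrivial connection not lying on the boundary hyperplanes.

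\textbf{The main obstacle} I expect is precisely this confinement argument: the invariant-box Proposition is stated for the parabolic PDE, not for the travelling-wave ODE system, where the flow is four-dimensional and the relevant ``box'' in phase space is $[0,1]\times\R\times[0,1]\times\R$ with the velocity components \emph{unconstrained}. One must therefore verify that the sign conditions on $f$ at the faces $v_n\in\{0,1\}$, $v_h\in\{0,1\}$ translate into a no-exit condition for the second-order ODE flow — this is subtler than for the PDE because the profile can overshoot via its derivative. The cleanest route is to argue that any first exit through a face would contradict the limiting behaviour at $\xi\to\pm\infty$ guaranteed by Theorem~\ref{thm:heterocl} (the orbit is asymptotic to points \emph{on} the box corners $(0,0,0,0)$ and $(1,0,0,0)$), combined with a maximum-principle / comparison argument on each scalar profile equation, treating the coupling term as a forcing of definite sign. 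Establishing that $v_h^*$ does not become negative despite the sink term $-v_h/\tau_{hn}$ is the delicate point, and the condition $c>c_{hn}^*$ is exactly what guarantees the approach to $v_h=0$ is monotone rather than oscillatory, ruling out negative excursions.
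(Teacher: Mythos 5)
Your core existence argument is exactly the paper's: invoke Theorem \ref{thm:heterocl} to obtain, for each $V_0\in E^u$, a heteroclinic trajectory of \eqref{eq:ode3} from $(1,0,0,0)$ to $(0,0,0,0)$, use the flow parametrization to place $V_0$ at some $\xi_0$, and inherit the boundary conditions \eqref{bc:+inf} from the limits of that trajectory. Two differences are worth noting. First, you treat the nonemptiness of $E^u \cap [0,1]\times\R\times[0,1]\times\R$ as a hypothesis, whereas the paper proves it by exhibiting an explicit point: $(v_K,v_K',0,0)\left|_{\xi_1}\right.$, where $v_K$ is the KPP wave. Since $(v_K,0)$ solves \eqref{eq:ode1} and its lift connects $(1,0,0,0)$ to $(0,0,0,0)$, that orbit lies in the unstable manifold $E^u$, and $0<v_K<1$ places the point inside the box; without this observation the theorem could be vacuous, so you should include it. Second, the bulk of your proposal --- the positivity/confinement argument keeping $(v_n^*,v_h^*)$ in $[0,1]^2$ --- is not in the paper at all: the paper's proof stops once the heteroclinic connection through $V_0$ is produced, i.e.\ it reads ``satisfying \eqref{bc:+inf}'' as the two limits only, and makes no claim that the profiles remain in $[0,1]$ or stay positive (bounds for nonnegative profiles are handled separately, in Lemma \ref{lem:bdd}, \emph{assuming} nonnegativity). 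Your instinct that this is the delicate point is sound --- the invariant-box Proposition is indeed a PDE statement and does not transfer automatically to the four-dimensional travelling-wave flow, where orbits can escape through the unconstrained derivative components --- but you leave exactly that step as a sketch (your self-declared ``main obstacle''), so as written it is an unfinished addendum rather than a completed strengthening. For the statement as the paper itself interprets it, your first two paragraphs already constitute the entire proof, modulo the nonemptiness point above.
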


\begin{proof}
Let $(v_n,v_h)$ be a solution of \eqref{eq:ode1} and $\ V(\xi) =(V_1,V_2,V_3,V_4)=(v_n,v_n',v_h,v_h')$, satisfying Eq.   \eqref{eq:ode3}.

From  Theorem \ref{thm:heterocl}, there is a  real 2-dimensional  manifold of the equilibrium $E^u$ such that, for any initial data $V_0\in E^u,$  the trajectory $V=V(\xi;V_0) \to (1,0,0,0)$ as $\xi\to-\infty,$
and
$V=V(\xi;V_0)\to (0,0,0,0)$ as $\xi\to\infty.$

Obviously $(v_K,v_K',0,0)\left|_{\xi_1}\right. \in E^u \cap [0,1]\times\R\times[0,1]\times\R$ and $v_K(\xi_1)<1$.
Let us consider $V_0\in E^u \cap [0,1]\times\R\times[0,1]\times\R\neq\emptyset,$ then $(v_n,v_h)$ satisfies the required conditions.
\end{proof}

Next, we plan to analyze the bifurcation from $(v_n,v_h)=(v_K,0)$ to get conditions for having $v_n,v_h>0$. Let us denote by $z = (z_n,z_h):=(v_n-v_K,v_h),$ then $z$ satisfies the equations
\begin{subequations}\label{eq:ode2}
\begin{eqnarray}
     L_1z_n & = & (2-z_n-z_h)z_n/\tau_n +\left(1/\tau_{hn}
     -1/\tau_n v_K\right)z_h, \\ \label{bb}
     L_2z_h & = & (2-z_n-z_h)z_h/\tau_h-z_h/\tau_{hn},\qq{for } \xi\in\R,
\end{eqnarray}
\end{subequations}
where, 
\begin{subequations}\label{def:L1:L2}
\begin{eqnarray}
L_1 & := & -D_n\dfrac{d^2}{d\xi^2}-c\dfrac{d}{d\xi} +\dfrac{2}{\tau_{n}}v_K+\dfrac{1}{\tau_{n}}, \\
L_2 & := & -D_h\dfrac{d^2}{d\xi^2}-c\dfrac{d}{d\xi} +\dfrac{1}{\tau_{h}}v_K+\dfrac{1}{\tau_{h}},
\end{eqnarray}
\end{subequations}
in order to avoid that $0\in\s(L_1)$ or $0\in\s(L_2)$, we add $z_n/\tau_n$ and $z_h/\tau_h$ respectively to both sides, consequently $\s(L_1)\subset[1/\tau_{n},\infty)$  and $\in\s(L_2)\subset[1/\tau_{h},\infty)$ respectively, see \cite{Arendt-Batty}.

We look for solutions $(z_n,z_h)$ such that
$\big(z_n,(z_n)',z_h,(z_h)'\big)\to (0,0,0,0)$ with  $z_h>0$, as $|\xi|\to +\infty$.
Let $(z_n,z_h)$ be a solution of \eqref{eq:ode2} and let us define $Z=(Z_1,Z_2,Z_3,Z_4)=(z_n,z_n',z_h,z_h')$, then
 $\ Z=Z(\xi),$ satisfies the 
 equations
\begin{subequations}
\begin{equation}\label{eq:ode4}
\dfrac{dZ}{d\xi} = AZ+B(\xi)Z+f(\xi,Z),
\end{equation}
where
\begin{eqnarray}\label{def:A:F:z}
A  :=
\begin{pmatrix} 0 & 1 & 0 & 0 \\
 -1/(D_n \tau_{n}) & \, -c/D_n  &
  -1/(D_n\, \tau_{hn})& 0\\
0 & 0 & 0 & 1\\
0 & 0 & \, -\dfrac{1}{D_h}\left(\dfrac{1}{\tau_{h}}-\dfrac1{\tau_{hn}}\right)\, & -\dfrac{c}{D_h}
\end{pmatrix},\qquad\qquad\qquad \\
B(\xi)  :=
\begin{pmatrix} 0 & 0 & 0 & 0 \\
 2\dfrac{v_K}{D_n\, \tau_{n}} & 0 &
 \, \dfrac{v_K}{D_n\, \tau_{n}}\, & 0\\
0 & 0 & 0 & 0\\
0 & 0 &  \dfrac{v_K}{D_h\, \tau_{h}}
& 0
\end{pmatrix}, \
f(\xi,Z)  :=
\begin{pmatrix} 0 \\
 \dfrac{1}{D_n\, \tau_{n}}\, \left(Z_1 + Z_3 \right) Z_1  \\
0 \\
  \dfrac{1}{D_h\, \tau_{h}}\, \left(Z_1 + Z_3\right) Z_3
\end{pmatrix}.
\end{eqnarray}
\end{subequations}
Using \eqref{bc:vK:+-inf}, we get $
\lim_{\xi\to+\infty}B(\xi)  =:  B(+\infty) =
\left(\begin{smallmatrix} 0 & 0 & 0 & 0 \\
 0 & 0 & 0 & 0\\
0 & 0 & 0 & 0\\
0 & 0 & 0 & 0
\end{smallmatrix}\right)$, and
\begin{equation}
\lim_{\xi\to-\infty}B(\xi)  =:  B(-\infty) =
\begin{pmatrix} 0 & 0 & 0 & 0 \\
 \dfrac{2}{D_n\, \tau_{n} }& 0 &
   \, \dfrac{1}{D_n\, \tau_{n} }\, & 0\\
0 & 0 & 0 & 0\\
0 & 0 &  \dfrac{1}{D_h\, \tau_{h} }
& 0
\end{pmatrix}.
\end{equation}

The following proposition states the asymptotic behavior of a solution of \eqref{eq:ode4}, by using classical perturbation theory, see \citet[Chapter 13]{Coddington-Levinson}.
We are going to see that the minimal speed wave of the hypoxic wave is given by $c_{hn}^*,$ see \eqref{def:chn*} for a definition.

\begin{prop}\label{prop:asymptotic}
Let $Z=(Z_1,Z_2,Z_3,Z_4)$, $\ Z=Z(\xi),$ be a solution of
\eqref{eq:ode4}. Assume that $\tau_{hn}>\tau_{h}.$ Assume also that $c>\max\{c_n^*,c_{hn}^*\},$ where $c_n^*,c_{hn}^*$ are defined in \eqref{def:cn*}, \eqref{def:chn*} respectively.
Then
\begin{itemize}
\item[(i)] $\s(A):=\{\la_{1\pm},\la_{2\pm}\}\subset\R,$  where $\la_{1\pm},\la_{2\pm}$ are defined in \eqref{def:eig:z}, $\mu_{+\infty}:=\max \s(A)<0,$ and the trivial solution is
asymptotically stable. Moreover, if $|Z(0)=(z_n(0),z_n'(0),z_h(0),z_h'(0))|$
is sufficiently small, then
$$\limsup_{\xi\to+\infty}\dfrac{\log |Z(\xi)|}{\xi}\leq \mu_{+\infty}<0,
$$
where $Z(\xi)=(z_n(\xi),z_n'(\xi),z_h(\xi),z_h'(\xi)).$
\item[(ii)] $\s(A+B(-\infty)):=\{\widehat{\la}_{1\pm},\widehat{\la}_{2\pm}\}
\subset\R,$ where $\widehat{\la}_{1\pm},\widehat{\la}_{2\pm}$ are defined in \eqref{def:eig:z2}, and their eigenvalues
satisfy the following inequalities $\widehat{\la}_{1-}<0<
\widehat{\la}_{1+},$ and $ \widehat{\la}_{2-}<0<\widehat{\la}_{2+}.$

Moreover,  there exists a real
two-dimensional
manifold $S$ containing the origin, such that any solution $Z
$  of
\eqref{eq:ode4} with $Z(\xi_0)\in S
$ for any $\xi_0,$ satisfy $Z(\xi)\to 0$ as $\xi\to-\infty,$ and
$$\liminf_{\xi\to-\infty}\dfrac{\log |Z(\xi)|}{\xi}\geq
\mu_{-\infty}:=\max \{\widehat{\la}_{1-},\widehat{\la}_{2-}\}>0,
$$
where $Z(\xi)=(z_n(\xi),z_n'(\xi),z_h(\xi),z_h'(\xi))$.

\item[(iii)] Moreover, there exists a $C>0$ such that
any solution $Z$ near the origin, but not on $S$ at $\xi=\xi_0$ can not
satisfy $|Z(\xi)|\le C$ for $\xi\le \xi_0.$
\end{itemize}
\end{prop}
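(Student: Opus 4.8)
The plan is to read the three claims of Proposition \ref{prop:asymptotic} as the standard conclusions of the perturbation theory of \citet[Chapter 13]{Coddington-Levinson}, applied to \eqref{eq:ode4} at its two autonomous limits: the constant matrix $A$ governs the flow as $\xi\to+\infty$ (where $B(\xi)\to 0$), while $A+B(-\infty)$ governs it as $\xi\to-\infty$. The structural fact I would establish first, and which drives everything, is that both $A$ and $A+B(-\infty)$ are block upper-triangular, since in each the lower-left $2\times2$ block (rows $3,4$, columns $1,2$) vanishes. Consequently the spectrum of each matrix splits into the spectra of two $2\times2$ companion-type blocks, an ``$n$-block'' acting on $(z_n,z_n')$ and an ``$h$-block'' acting on $(z_h,z_h')$, and the whole eigenvalue analysis reduces to two scalar quadratics. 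I would also record at the outset that $f(\xi,Z)$ in \eqref{def:A:F:z} is quadratic in $Z$, so $|f(\xi,Z)|\le C|Z|^2 = o(|Z|)$ uniformly near the origin; this higher-order smallness is what makes the nonlinear problem inherit the linear asymptotics.

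For part (i) I would compute the two quadratics attached to $A$. The $n$-block gives $\la^2+(c/D_n)\la+1/(D_n\tau_n)=0$, whose roots are real precisely when $c\ge c_n^*$ (see \eqref{def:cn*}) and, having positive product and negative sum, are both negative; the $h$-block gives $\la^2+(c/D_h)\la+\tfrac1{D_h}(1/\tau_h-1/\tau_{hn})=0$, where the hypothesis $\tau_{hn}>\tau_h$ forces the constant term positive, the roots are real precisely when $c\ge c_{hn}^*$ (see \eqref{def:chn*}), and again both are negative. Hence under the stated hypotheses $\sigma(A)\subset\R$ with $\mu_{+\infty}=\max\sigma(A)<0$. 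Since $B(\xi)\to0$ as $\xi\to+\infty$ and $f$ is higher order, \eqref{eq:ode4} is an asymptotically autonomous, higher-order perturbation of the Hurwitz system $Z'=AZ$, and the asymptotic-stability theorem of \citet[Chapter 13]{Coddington-Levinson} yields both the asymptotic stability of the origin and the Lyapunov-exponent bound $\limsup_{\xi\to+\infty}\log|Z(\xi)|/\xi\le\mu_{+\infty}$ for sufficiently small $Z(0)$.

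For parts (ii) and (iii) I would repeat the quadratic computation for $A+B(-\infty)$. There the $n$-block characteristic polynomial becomes $\la^2+(c/D_n)\la-1/(D_n\tau_n)=0$ and the $h$-block becomes $\la^2+(c/D_h)\la-1/(D_h\tau_{hn})=0$; in each the constant term is now \emph{negative}, so each block contributes exactly one positive and one negative real root, giving the hyperbolic splitting $\widehat\la_{1-}<0<\widehat\la_{1+}$ and $\widehat\la_{2-}<0<\widehat\la_{2+}$. Writing $B(\xi)=B(-\infty)+\tilde B(\xi)$ with $\tilde B(\xi)\to0$ as $\xi\to-\infty$ (indeed exponentially, since the KPP front $v_K\to1$ exponentially), and reversing the independent variable via $\eta=-\xi$, the positive eigenvalues of $A+B(-\infty)$ become the stable directions of the reversed system. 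The stable-manifold theorem of \citet[Chapter 13]{Coddington-Levinson} then produces the real two-dimensional invariant manifold $S$ through the origin of solutions with $Z(\xi)\to0$ as $\xi\to-\infty$, together with the Lyapunov-exponent bound $\liminf_{\xi\to-\infty}\log|Z(\xi)|/\xi\ge\mu_{-\infty}>0$ controlled by the slowest-decaying positive eigenvalue. Part (iii) is the complementary saddle statement of the same theory: because the reversed system has a genuine two-dimensional unstable subspace and $f$ is higher order, any solution starting near the origin but off $S$ has a nonzero component along the unstable directions that must grow, so it cannot remain within a fixed ball of radius $C$ for all $\xi\le\xi_0$.

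The spectral computations (the block-triangular reduction to two quadratics) are where the hypotheses $\tau_{hn}>\tau_h$ and $c>\max\{c_n^*,c_{hn}^*\}$ enter, and they are the easy part. I expect the main obstacle to lie in the bookkeeping of (ii)--(iii): matching the time orientation so that the positive eigenvalues of $A+B(-\infty)$ correspond to the $\xi\to-\infty$ stable manifold $S$, checking that $\tilde B(\xi)\to0$ at a rate compatible with the hypotheses of the cited theorems, and extracting the sharp sign and magnitude of $\mu_{-\infty}$ from the genuinely non-autonomous perturbed flow rather than from the constant-coefficient limit alone.
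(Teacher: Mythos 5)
Your proposal is correct and follows essentially the same route as the paper's own proof: reduce the spectra of $A$ and $A+B(-\infty)$ to the two $2\times2$ companion blocks (yielding exactly the quadratics behind \eqref{def:eig:z} and \eqref{def:eig:z2}), verify the sign conditions from $\tau_{hn}>\tau_h$ and $c>\max\{c_n^*,c_{hn}^*\}$, and delegate all dynamical conclusions to the perturbation theory of \citet[Chapter 13]{Coddington-Levinson}. In fact the paper's proof is even terser than yours — it consists only of the eigenvalue computations plus the citation — so your added bookkeeping (the block-triangular structure, the quadratic smallness of $f$, the time reversal $\eta=-\xi$ for the $\xi\to-\infty$ manifold) fills in precisely the details the authors left implicit.
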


\begin{proof} This is a standard result in the theory of asymptotic behavior
of ordinary differential equations \citep[Chapter 13]{Coddington-Levinson}.

We have to analyze the sign of $Re(\s(A))$ where
$
\s(A):=\{\la_{1+},\la_{1-},\la_{2+},\la_{2-}\},
$
and $\la_{1\pm},\la_{2\pm}$ are defined as
\begin{equation}\label{def:eig:z}
\la_{1\pm} := \dfrac{-c\pm \sqrt{c^2-4\dfrac{D_n}{\tau_{n}}}}{2D_n},
  \la_{2\pm} := \dfrac{-c\pm\sqrt{c^2-4D_h\left(\dfrac{1}{\tau_{h}}- \dfrac{1}{\tau_{hn}}\right)}}{2D_h}.
\end{equation}
Obviously, if $\tau_{hn}\ge\tau_{h},$   then $c_{hn}^*\in\R$. Moreover, if $ c> c_{n}^*>0,$
then the eigenvalues $\la_{1\pm}$ are negative real numbers, and if $c> c_{hn}^*,$ then the eigenvalues $\la_{2\pm}$ are negative real numbers.

The eigenvalues of $\s(A+B(-\infty))$ are given by
\begin{equation}\label{def:eig:z2}
\widehat{\la}_{1\pm} := \dfrac{-c\pm \sqrt{c^2+4D_n/\tau_{n}}}{2D_n},\qquad \widehat{\la}_{2\pm} := \dfrac{-c\pm\sqrt{c^2+4 D_h/\tau_{hn}}}{2D_h}.
\end{equation}

Whenever $0<\tau_{n}<\infty,$ and $\tau_{hn}>0,$ the eigenvalues
$\widehat{\la}_{1\pm},\ \widehat{\la}_{2\pm}$ are  real numbers and
satisfy $\widehat{\la}_{1-}<0<\widehat{\la}_{1+},$ and $ \widehat{\la}_{2-}<0<\widehat{\la}_{2+}.$

\end{proof}

\bigskip

\begin{proof}{\it of Theorem \ref{thm:heterocl}.}\label{proof:thm:heterocl}
Let $\ V=V(\xi;V_0)$ be  a solution of the  Eq. \eqref{eq:ode3}.
From  Theorem \ref{th:eq:10}, there exists a  real 2-dimensional unstable manifold of the equilibrium $(1,0,0,0)$ denoted by  $E^u$. In other words, for any $V_0\in E^u,$  the trajectory $V=V(\xi;V_0)$ satisfies
$V=V(\xi;V_0)\to (1,0,0,0)$ as $\xi\to-\infty.$

\medskip

Let $z = (z_n,z_h)$ be a solution of Eq. \eqref{eq:ode2}. Since $(z_n,z_h):=(v_n-v_K,v_h),$ then $Z=(Z_1,Z_2,Z_3,Z_4)=(z_n,z_n',z_h,z_h')$,   satisfies Eq. \eqref{eq:ode4}, and  $Z=(v_n-v_K,v_n'-v_K',v_h,v_h')= V-(v_K,v_K',0,0)$.

Also, from \eqref{bc:vK:+-inf}, for any $\e>0$ there exists an $\xi_1\in\R$ such that
$$|Z(\xi)|\leq |V(\xi)-(1,0,0,0)|+|(v_K(\xi),v_K'(\xi),0,0)-(1,0,0,0)|\le \e,$$
for any $\xi\le\xi_1$. Let $Z^1=Z(\xi_1)$ and consider the trajectory $Z=Z(\xi;Z^1)$. From Proposition \ref{prop:asymptotic}, $Z=Z(\xi;Z^1)$ satisfies
$|Z|\to 0$ as $\xi\to\infty.$ Therefore, Eq.  \eqref{bc:vK:+-inf} guarantees that
$V=Z+(v_K,v_K',0,0) \to (0,0,0,0)$ as $\xi\to\infty$.
\end{proof}

\bigskip

Next, we try to find necessary conditions for having a bifurcation
from the trivial solution. Let us first pose the abstract framework
of the problem.

A weak solution of
\begin{equation}\label{eq:L:f}
L_1u_1=f_1,\qquad L_2u_2=f_2,
\end{equation}
can be defined as usual: multiplying by a test function $\psi=(\psi_1,\psi_2)\in H^1(\R)^2$, and integrating on $\R,$ we obtain $\forall \psi=(\psi_1,\psi_2)\in H^1(\R)^2,$
\begin{multline}
\int_\R D_n u_1'\psi_1'-cu_1'\psi_1+\dfrac{2}{\tau_{n}}v_K u_1\psi_1+\dfrac{1}{\tau_{n}} u_1\psi_1 \\ 
+\int_\R D_h u_2'\psi_2'-cu_2'\psi_2+\dfrac{1}{\tau_{h}}v_K u_2\psi_2+\dfrac{1}{\tau_{h}} u_2\psi_2 
 =\int_\R f_1\psi_1+f_2\psi_2.
\end{multline}

Let us set $H:=H^1(\R)^2$ with $\| u\|_H:=\big(\| u_1\|^2+\| u_2\|^2\big)^{1/2}$ and define the  bilinear form $a:H\times H\to \R$ given by
\begin{multline}
a(u,\psi):=\int_\R \left(D_n u_1'\psi_1'-cu_1'\psi_1+\dfrac{2}{\tau_{n}}v_K u_1\psi_1+\dfrac{1}{\tau_{n}} u_1\psi_1 \right)\\ 
+\int_\R \left(D_h u_2'\psi_2'-cu_2'\psi_2+\dfrac{1}{\tau_{h}}v_K u_2\psi_2+\dfrac{1}{\tau_{h}} u_2\psi_2\right),
\end{multline}
for all $u=(u_1,u_2),\ \psi=(\psi_1,\psi_2)\in H.$
\begin{thm}\label{lem:a}
Assume that $\min\left\{D_n,D_h\right\}>0,$ and that $0<\tau_{n},\tau_{h}<\infty.$
Then, for all $f=(f_1,f_2)\in H^*,$ the dual space of all linear continuous functional on $H$, there exists a unique $u=(u_1,u_2)\in H$ such that
$$a(u,\psi)=\langle f,\psi\rangle_{H^*,H} \qq{for all} \psi=(\psi_1,\psi_2)\in H,
$$
and
$$\|u\|_H\leq C\|f\|_{H^*}.
$$
The constant $C$  can be chosen as $C:= 1 /\min \left\{D_n,1/\tau_{n},D_h,1/\tau_{h}\right\}.$

\smallskip

Moreover, if $f=(f_1,f_2)\in L^2(\R)^2,$ then  $u=(u_1,u_2)\in H^2(\R)^2$,  and $\big(u_1,u'_1,u_2,u'_2\big)\to (0,0,0,0)$ as $|\xi|\to +\infty.$ 

Furthermore, if $f=(f_1,f_2)\in L^2(\R)^2\cap C(\R)^2,$ then  $u=(u_1,u_2)\in C^2(\R)^2\cap H^2(\R)^2$,
and $ \|u\|_{H^2(\R)^2}\leq C\|f\|_{L^2(\R)^2}$.
\end{thm}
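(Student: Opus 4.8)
The plan is to read \eqref{eq:L:f} as a single decoupled elliptic problem for $u=(u_1,u_2)$ governed by the bilinear form $a$, apply the Lax--Milgram theorem on $H=H^1(\R)^2$ to get existence, uniqueness and the $H$-estimate, and then bootstrap regularity directly from the pointwise identities $L_1u_1=f_1$, $L_2u_2=f_2$. The boundedness of $a$ on $H\times H$ is routine: the gradient terms, the convection terms $-cu_i'\psi_i$, and the zeroth-order terms are each controlled by $\|u\|_H\|\psi\|_H$, using $0<v_K<1$ to bound $\|v_K\|_\infty\le 1$ in the potential terms.

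The heart of the matter is coercivity, and here the key observation is that the non-self-adjoint convection term is antisymmetric and therefore invisible to $a(u,u)$: for $u_1\in H^1(\R)$ one has $\int_\R u_1'u_1\,d\xi=\tfrac12\int_\R (u_1^2)'\,d\xi=0$ since $u_1(\xi)\to 0$ as $|\xi|\to\infty$, and likewise for $u_2$. Because $v_K\ge 0$, the potential contributions $\tfrac{2}{\tau_n}v_Ku_1^2$ and $\tfrac{1}{\tau_h}v_Ku_2^2$ are nonnegative and can simply be discarded, leaving
$$a(u,u)\ge D_n\|u_1'\|_{L^2}^2+\tfrac{1}{\tau_n}\|u_1\|_{L^2}^2+D_h\|u_2'\|_{L^2}^2+\tfrac{1}{\tau_h}\|u_2\|_{L^2}^2\ge \alpha\,\|u\|_H^2,$$
with $\alpha:=\min\{D_n,1/\tau_n,D_h,1/\tau_h\}>0$. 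Lax--Milgram then produces the unique weak solution and the bound $\|u\|_H\le \alpha^{-1}\|f\|_{H^*}$, which is precisely the stated $C$.

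For the regularity claims I would argue from the equations themselves. If $f\in L^2(\R)^2$, solving for the second derivative, e.g. $D_nu_1''=-cu_1'+\big(\tfrac{2}{\tau_n}v_K+\tfrac{1}{\tau_n}\big)u_1-f_1$, shows $u_1''\in L^2(\R)$ because every term on the right lies in $L^2(\R)$ (here $u_1\in H^1$, $v_K$ bounded, $f_1\in L^2$); the same for $u_2''$, so $u\in H^2(\R)^2$. The embedding $H^2(\R)\hookrightarrow C^1(\R)$ together with the decay of $H^1(\R)$ functions gives $(u_i,u_i')\to(0,0)$ as $|\xi|\to\infty$. If moreover $f$ is continuous, the right-hand sides of those identities are continuous ($v_K$ is smooth and $u_i,u_i'$ continuous), whence $u_i''\in C(\R)$ and $u\in C^2(\R)^2$. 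Finally, taking $L^2$ norms in the same identities and inserting the already-proved $H^1$ bound $\|u\|_{H^1}\le\alpha^{-1}\|f\|_{L^2}$ (using $\|f\|_{H^*}\le\|f\|_{L^2}$ for $f\in L^2$) yields $\|u\|_{H^2}\le C\|f\|_{L^2}$ for a suitable constant.

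I expect the only nontrivial step to be coercivity, and specifically the two observations that make it work: that the first-order convection term integrates to zero on $H^1(\R)$ so that the loss of self-adjointness is harmless for the energy estimate, and that the deliberate shifts by $u_n/\tau_n$ and $u_h/\tau_h$ built into $L_1,L_2$ are exactly what keep the zeroth-order coefficient bounded below away from zero (guaranteeing $0\notin\s(L_i)$). Everything else is standard Lax--Milgram plus an elliptic bootstrap.
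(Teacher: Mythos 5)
Your proposal is correct and follows essentially the same route as the paper: coercivity of $a$ via the vanishing of $\int_\R u_i'u_i$ (using the decay of $H^1(\R)$ functions at infinity), Lax--Milgram with the same constant $\alpha=\min\{D_n,1/\tau_n,D_h,1/\tau_h\}$, and then regularity and the $H^2$ estimate bootstrapped from the pointwise equations. Your write-up is in fact slightly more explicit than the paper's at the $H^2$ step (you solve for $u_i''$ directly, where the paper only gestures at it), but the argument is the same.
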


\begin{proof}
The  bilinear form $a$ is continuous and coercive. Indeed, by definition
\begin{multline}
a(u,u)=\int_\R \left(D_n (u_1')^2-cu_1'u_1+\dfrac{2}{\tau_{n}}v_K u_1^2+\dfrac{1}{\tau_{n}} u_1^2\right)\\
+\int_\R \left(D_h (u_2')^2-cu_2'u_2+\dfrac{1}{\tau_{h}}v_K u_2^2+\dfrac{1}{\tau_{h}} u_2^2\right),
\end{multline}
and since
$\int_R^R u_1'u_1=\left.\frac12 u_1^2\right|_{-R}^R=\frac12 \left[u_1(R)^2-u_1(-R)^2\right]\to 0$ as $R\to\infty
$ see \cite[Corollary 8.9]{Brezis},
we obtain
\begin{eqnarray}
a(u,u) & = & \int_\R \Big(D_n (u_1')^2+\dfrac{2}{\tau_{n}}v_K u_1^2+\dfrac{1}{\tau_{n}} u_1^2
+D_h (u_2')^2+\dfrac{1}{\tau_{h}}v_K u_2^2+\dfrac{1}{\tau_{h}} u_2^2\Big) \nonumber \\
 & \geq & \min \left\{D_n,\dfrac{1}{\tau_{n}}\right\}\| u_1\|^2+\min \left\{D_h,\dfrac{1}{\tau_{h}}\right\}\| u_2\|^2 \nonumber \\
& \geq & \al \big(\| u_1\|^2+\| u_2\|^2\big)=\al \| u\|_H^2
\end{eqnarray}
for all $u=(u_1,u_2)\in H,$  for $\al:=\min\left\{D_n,1/\tau_{n}, D_h,1/\tau_{h}
\right\}>0.$  It follows that $ \al \| u\|_H^2\leq a(u,u)=\langle f,u\rangle_{H^*,H}
\leq \| f\|_{H^*}\|\| u\|_H$ and thus $  \| u\|_H\leq C\| f\|_{H^*}$ with a constant $C$ dependent only on $\al.$
The Lax-Milgram lemma completes this part of the proof.

\medskip

By hypothesis $u\in H= H^1(\R)^2.$ First, note that  if $f,u'\in L^2(\R)^2,$ then by definition of $H^1,$ $u'\in H^1(\R)^2,$
therefore $u\in  H^2(\R)^2.$
Moreover, by \citet[Corollary 8.9]{Brezis}, $u_1,u_2,u'_1,u'_2\to 0$ as $|\xi|\to +\infty.$

Due to $u\in H^2(\R)^2,$ then $u,u'\in C(\R)^2.$
Furthermore, if $f=(f_1,f_2)\in L^2(\R)^2\cap C(\R)^2,$ since $u,u',v_K,$ and $f$ are continuous on $\R$ we get $u''\in C(\R)^2$, and therefore $u\in C^2(\R)^2.$

\smallskip

Finally, taking into account that $u$ is a classical solution,
we get
\begin{equation}
\|u''\|_{L^2(\R)^2}\leq C\left(\|u\|_{H^1(\R)^2}
+\|f\|_{L^2(\R)^2}\right)\leq C\|f\|_{L^2(\R)^2},
\end{equation}
which completes the proof.
\end{proof}

\begin{cor}\label{cor:a}
Assume that $f_j\to f$ in $ L^2(\R)^2.$ Let $u_j$, $j\ge 0,$ be such that
$a(u_j,\psi)=\langle f_j,\psi\rangle_{H^*,H}$,  for all $\psi=(\psi_1,\psi_2)\in H,
$. 
Then  $u_j\to u$ in $H^2(\R)^2$
\end{cor}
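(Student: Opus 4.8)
The plan is to exploit the fact that the problem is \emph{linear}, so that the solution map $f\mapsto u$ furnished by Theorem~\ref{lem:a} is a bounded linear operator from $L^2(\R)^2$ into $H^2(\R)^2$, and then read off the convergence directly from the continuity of this map together with its operator bound.

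First I would record that, since $f_j\to f$ in $L^2(\R)^2$, both the limit $f$ and each $f_j$ belong to $L^2(\R)^2$. Hence Theorem~\ref{lem:a} applies to every right-hand side in sight: there is a unique $u\in H^2(\R)^2$ with $a(u,\psi)=\langle f,\psi\rangle_{H^*,H}$ for all $\psi\in H$, and by the uniqueness assertion of that theorem the given functions $u_j\in H^2(\R)^2$ are exactly the solutions associated with the data $f_j$. This is the $u$ appearing in the statement of the corollary.

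Next I would subtract, using the bilinearity of $a$ in its first argument. Setting $w_j:=u_j-u$ and $g_j:=f_j-f$, one gets $a(w_j,\psi)=\langle g_j,\psi\rangle_{H^*,H}$ for all $\psi\in H$, so $w_j$ is the solution produced by Theorem~\ref{lem:a} with datum $g_j\in L^2(\R)^2$ and $g_j\to 0$ in $L^2$. The a priori bound $\|w_j\|_{H^2(\R)^2}\le C\|g_j\|_{L^2(\R)^2}$ then yields $\|u_j-u\|_{H^2(\R)^2}\le C\|f_j-f\|_{L^2(\R)^2}\to 0$, which is the claim.

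The only delicate point — and the one I expect to be the main (though minor) obstacle — is that the $H^2$ estimate in Theorem~\ref{lem:a} was formulated for $f\in L^2(\R)^2\cap C(\R)^2$, whereas the differences $g_j$ need not be continuous. I would resolve this by observing that the estimate in fact holds for arbitrary $L^2$ data: once $w_j\in H^2(\R)^2$ is known, the identity $D_n(w_j)_1''=-c(w_j)_1'+\bigl(\tfrac{2}{\tau_n}v_K+\tfrac{1}{\tau_n}\bigr)(w_j)_1-(g_j)_1$ holds in $L^2$ (and analogously for the second component), so taking $L^2$ norms and combining with the Lax--Milgram bound $\|w_j\|_H\le C\|g_j\|_{H^*}\le C\|g_j\|_{L^2}$ gives $\|w_j\|_{H^2}\le C\|g_j\|_{L^2}$ with no continuity requirement; continuity was needed in Theorem~\ref{lem:a} only to upgrade $u$ to a classical $C^2$ solution. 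Alternatively, one could argue by density, approximating each $g_j$ by functions in $L^2\cap C$ and passing to the limit in the estimate, which already holds on that dense subclass.
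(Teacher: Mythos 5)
Your proof is correct and is essentially the paper's own argument: the paper's proof of this corollary is a single line, ``it follows from Theorem~\ref{lem:a} taking $g_j=f_j-f$ and $v_j=u_j-u$,'' which is exactly your subtraction-plus-a-priori-bound scheme. Your additional observation --- that the estimate $\|u\|_{H^2(\R)^2}\leq C\|f\|_{L^2(\R)^2}$ is stated in Theorem~\ref{lem:a} only for $f\in L^2(\R)^2\cap C(\R)^2$, yet holds for arbitrary $L^2$ data by reading the equation componentwise in $L^2$ (or by density), since continuity was only needed to get a classical $C^2$ solution --- is a legitimate repair of a detail the paper silently glosses over when it applies the theorem to the differences $g_j$, which need not be continuous.
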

\begin{proof}
It follows from theorem \ref{lem:a} taking $g_j=f_j-f$ and $v_j=u_j-u.$
\end{proof}

Let $f=(f_1,f_2),$ $f=f(z)$,  denote the nonlinearity of \eqref{eq:ode2}, i.e.
\begin{subequations}
\begin{eqnarray}
f_1(z_n,z_h) & := & \dfrac{1}{\tau_{n}}(2-z_n-z_h)z_n
+\left(\dfrac{1}{\tau_{hn}}-\dfrac{1}{\tau_{n}}v_K\right)z_h,\\
f_2(z_n,z_h) & := & \dfrac{1}{\tau_{h}}(2-z_n-z_h)z_h-\dfrac{1}{\tau_{hn}}z_h.
\end{eqnarray}
\end{subequations}

We will say that $z=(z_n,z_h)\in H$ is a {\it weak solution} of \eqref{eq:ode2} if and only if \eqref{eq:ode2} is satisfied in a weak sense, which means that
\begin{equation}
a(z,\psi)=\int_\R f_1(z_n,z_h)\psi_1+ f_2(z_n,z_h)\psi_2,\end{equation}
for all $\psi=(\psi_1,\psi_2)\in H.$

The following result ensures that a weak solution is a classical solution, and provide the rates at $\pm\infty.$

\begin{thm}\label{lem:nol}
Assume that
$\min\left\{D_n,D_h\right\}>0,$ and that $0<\tau_{n},\tau_{h}<\infty.$
Assume that $\tau_{hn}>\tau_{h}$ and $c>\max\{c_n^*,c_{hn}^*\},$ where $c_n^*,c_{hn}^*$ are defined in \eqref{def:cn*}, \eqref{def:chn*} respectively. Let $z=(z_n,z_h)\in H$ be a weak solution of \eqref{eq:ode2}.

Then the following holds:
\begin{enumerate}
\item[i)] $z=(z_n,z_h)\in C^2(\R)^2\cap H^2(\R)^2,$ and
$
\|z\|_{H^2(\R)^2}\le C\|z\|_{L^2(\R)^2}.$

\item[ii)] Let $\ Z=(z_n,z_n',z_h,z_h')$, then
$$\limsup_{\xi\to+\infty}\dfrac{\log |Z(\xi)|}{\xi}\leq \mu_{+\infty}<0,
\qq{and}
\liminf_{\xi\to-\infty}\dfrac{\log |Z(\xi)|}{\xi}\geq
\mu_{-\infty}>0,
$$
where $\mu_{+\infty}:=\max \s(A) =\{\la_{1+}, \la_{1-},\la_{2+},\la_{2-}\}
$,
$\mu_{-\infty}:=\max \{\widehat{\la}_{1-},\widehat{\la}_{2-}\},$ and $\la_{1\pm},$ $\la_{2\pm},\widehat{\la}_{1-},\widehat{\la}_{2-}$ are defined in \eqref{def:eig:z}, \eqref{def:eig:z2} respectively.
\end{enumerate}
\end{thm}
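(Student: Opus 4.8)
The plan is to handle the two parts in order, since the asymptotic rates in (ii) rest on the decay of $z$ at infinity, which is itself a by-product of the regularity established in (i).

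For part (i) I would run an elliptic bootstrap anchored on Theorem \ref{lem:a}. Starting from a weak solution $z=(z_n,z_h)\in H=H^1(\R)^2$, the one-dimensional Sobolev embedding $H^1(\R)\hookrightarrow L^\infty(\R)\cap C(\R)$ shows that $z_n,z_h$ are bounded and continuous. Reading \eqref{eq:ode2} as $L_1z_n=f_1(z_n,z_h)$, $L_2z_h=f_2(z_n,z_h)$, I note that the right-hand side is quadratic in $z$ with coefficients involving the bounded function $v_K$: each quadratic term $z_n^2,\ z_nz_h,\ z_h^2$ is a product of an $L^\infty$ factor with an $L^2$ factor, hence lies in $L^2(\R)$, while the linear terms are already in $L^2(\R)$. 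Thus $f(z)\in L^2(\R)^2$, and Theorem \ref{lem:a} upgrades $z$ to $H^2(\R)^2$. Feeding this back, $z\in H^2(\R)^2\hookrightarrow C^1(\R)^2$ makes $f(z)$ continuous, so $f(z)\in L^2(\R)^2\cap C(\R)^2$, and the last assertion of Theorem \ref{lem:a} yields $z\in C^2(\R)^2\cap H^2(\R)^2$. The claimed bound then follows from $\|f(z)\|_{L^2}\le C(1+\|z\|_{L^\infty})\|z\|_{L^2}$ combined with $\|z\|_{H^2}\le C\|f(z)\|_{L^2}$; for the fixed solution $z$ the factor $\|z\|_{L^\infty}$ is a finite constant, giving $\|z\|_{H^2(\R)^2}\le C\|z\|_{L^2(\R)^2}$.

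For part (ii) the decisive preliminary is that, by part (i), $z\in H^2(\R)^2$ embeds into continuously differentiable functions vanishing at infinity, so $Z=(z_n,z_n',z_h,z_h')\to 0$ as $|\xi|\to\infty$; moreover $Z$ solves the first-order system \eqref{eq:ode4}, to which Proposition \ref{prop:asymptotic} applies. For the rate at $+\infty$ I would pick $\xi_0$ large enough that $|Z(\xi_0)|$ is as small as Proposition \ref{prop:asymptotic}(i) requires and that the perturbation $B(\xi)$ is small on $[\xi_0,\infty)$; that statement then gives $\limsup_{\xi\to+\infty}\log|Z(\xi)|/\xi\le\mu_{+\infty}<0$. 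For the rate at $-\infty$ the linearization is a genuine saddle, since $A+B(-\infty)$ has eigenvalues of both signs and a generic small solution grows in backward time. Here I would invoke Proposition \ref{prop:asymptotic}(iii): because $Z(\xi)\to 0$ as $\xi\to-\infty$, the trajectory remains near the origin on a half-line $(-\infty,\xi_0]$, and part (iii) forbids this unless $Z$ lies on the stable manifold $S$; hence $Z(\xi_0)\in S$, and Proposition \ref{prop:asymptotic}(ii) delivers $\liminf_{\xi\to-\infty}\log|Z(\xi)|/\xi\ge\mu_{-\infty}>0$.

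I expect the main obstacle to be the behavior at $-\infty$ rather than the regularity. The decay at $+\infty$ is immediate because all eigenvalues of $A$ are negative under $c>\max\{c_n^*,c_{hn}^*\}$ and $\tau_{hn}>\tau_h$, so every trajectory tending to zero does so at the spectral rate. At $-\infty$ the linearization is hyperbolic with a two-dimensional stable manifold, and the real content is certifying that our particular decaying solution actually lands on $S$ — precisely the exclusion statement Proposition \ref{prop:asymptotic}(iii), which is also what forces the decay $Z\to 0$ from part (i) to be in hand first. The only remaining care is bookkeeping: verifying that $f(\xi,Z)$ in \eqref{eq:ode4} is a higher-order perturbation near $Z=0$, which holds because each component of $f$ is quadratic, so the hypotheses of the Coddington--Levinson asymptotic theory underlying Proposition \ref{prop:asymptotic} are satisfied.
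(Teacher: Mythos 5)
Your proposal is correct and follows essentially the same route as the paper: part (i) via the one-dimensional Sobolev (Morrey) embedding plus the regularity statement of Theorem \ref{lem:a} applied to the right-hand side $f(z)\in L^2(\R)^2\cap C(\R)^2$, and part (ii) by feeding the resulting decay of $Z$ at $\pm\infty$ into Proposition \ref{prop:asymptotic}. The only difference is one of detail: the paper's proof is terse and leaves implicit the step you spell out at $-\infty$ (using Proposition \ref{prop:asymptotic}(iii) to force the decaying trajectory onto the manifold $S$ before invoking (ii)), so your version is a faithful, fuller rendering of the same argument.
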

\begin{proof}
i) By Morrey's Theorem, see \cite{Brezis},  there is a constant
$C$  such that
$$|z_i(x)-z_i(y)|\le C\|\nabla z_i\|_{L^2(\R)}|x-y|^{1/2},\qq{a.e.} x,y\in\R, \qquad i\in\{n,h\}.
$$
Moreover, due to $z\in H$, we get that $\big(z_n,z_h\big)\to (0,0)$ as $|\xi|\to +\infty$.
Therefore, applying Theorem \ref{lem:a}, we complete this part of the the proof.

ii) From part i) we get that $\big(z_n,z'_n,z_h,z'_h\big)\to (0,0,0,0)$ as $|\xi|\to +\infty$
and therefore, applying Proposition \ref{prop:asymptotic} we conclude the proof.
\end{proof}

\begin{lem}\label{lem:bdd}
Assume that $\tau_{hn}> \tau_{h}$ and $c>\max\{c_n^*,c_{hn}^*\},$ where $c_n^*,c_{hn}^*$ are defined in \eqref{def:cn*}, \eqref{def:chn*} respectively.
Let $(v_n,v_h)$ be a nonegative solution of \eqref{eq:ode1}, with $v_n\not\equiv 0$, $v_h\not\equiv 0$. Then, 
\begin{equation}\label{def:Mv}
0< v_n\leq \max\left\{1,\tau_{n}/\tau_{hn}\right\},
\qq{and}0< v_h\leq 1- \tau_h/\tau_{hn}.
\end{equation}

\end{lem}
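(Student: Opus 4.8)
The plan is to treat \eqref{eq:ode1} as a pair of second order ODEs and argue by the maximum principle at the extrema of $v_n$ and $v_h$, exploiting the sign of the reaction terms. Throughout I use the asymptotic behaviour of the solutions under consideration, namely that they satisfy \eqref{bc:+inf}--\eqref{bc:-inf} (equivalently, by Theorem \ref{lem:nol}, the shifted functions $z=(v_n-v_K,v_h)$ decay to $0$ at $\pm\infty$), so that $v_h\to 0$ as $\xi\to\pm\infty$, while $v_n\to 1$ as $\xi\to-\infty$ and $v_n\to 0$ as $\xi\to+\infty$. In particular, since $v_h\ge 0$, $v_h\not\equiv 0$ and $v_h\to 0$ at both ends, $v_h$ attains a strictly positive maximum at some finite $\xi_0$.

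First I would establish strict positivity. For $v_h$, the second equation in \eqref{eq:ode1} reads as a linear homogeneous second order ODE $-D_h v_h''-cv_h'=q(\xi)\,v_h$ with bounded coefficient $q:=(1-v_n-v_h)/\tau_h-1/\tau_{hn}$; if $v_h$ vanished at some point it would, being nonnegative, have a double zero there, and uniqueness for the linear initial value problem would force $v_h\equiv 0$, contradicting $v_h\not\equiv 0$. Hence $v_h>0$ on $\R$. For $v_n$, suppose $v_n(\xi_1)=0$; since $v_n\ge 0$ this is an interior minimum, so $v_n'(\xi_1)=0$ and $v_n''(\xi_1)\ge 0$, whence the left side of the first equation is $-D_nv_n''(\xi_1)\le 0$, while the right side equals $v_h(\xi_1)/\tau_{hn}>0$, a contradiction; thus $v_n>0$ on $\R$.

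Next I would bound $v_h$. At the positive maximum $\xi_0$ one has $v_h'(\xi_0)=0$ and $v_h''(\xi_0)\le 0$, so the left side of the second equation is $\ge 0$. Dividing the right side by $v_h(\xi_0)>0$ and using $v_n\ge 0$ gives $0\le(1-v_n-v_h)/\tau_h-1/\tau_{hn}\le(1-v_h)/\tau_h-1/\tau_{hn}$ at $\xi_0$, which rearranges to $v_h(\xi_0)\le 1-\tau_h/\tau_{hn}$; since $\xi_0$ is the maximum this is the claimed global bound (positive because $\tau_{hn}>\tau_h$).

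Finally I would bound $v_n$, distinguishing two cases. If $\sup v_n\le 1$ the bound $v_n\le\max\{1,\tau_n/\tau_{hn}\}$ is immediate. Otherwise, because $v_n\to 1$ and $v_n\to 0$ at $-\infty$ and $+\infty$ respectively, any value exceeding $1$ is attained at an interior maximum $\xi_2$; there $-D_nv_n''(\xi_2)\ge 0$ forces $(1-v_n-v_h)v_n/\tau_n+v_h/\tau_{hn}\ge 0$ at $\xi_2$, i.e. $(v_n+v_h-1)v_n/\tau_n\le v_h/\tau_{hn}$. Using $v_n(\xi_2)\ge 1$ we have $v_n+v_h-1\ge v_h$, so $v_h v_n/\tau_n\le v_h/\tau_{hn}$, and dividing by $v_h>0$ yields $v_n(\xi_2)\le\tau_n/\tau_{hn}$; combining the two cases gives \eqref{def:Mv}. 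The main obstacle is the bookkeeping that guarantees the relevant suprema are attained at interior points --- this is where the asymptotics \eqref{bc:+inf}--\eqref{bc:-inf} are essential --- together with the fact that every sign manipulation in the two upper bounds hinges on the strict positivity $v_h>0$ proved in the first step.
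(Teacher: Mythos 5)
Your proof is correct and takes essentially the same route as the paper's: strict positivity of $v_n,v_h$ (the paper asserts this via the "simple zeros" remark, which you fill in more explicitly), followed by evaluation of \eqref{eq:ode1} at interior maxima with the same sign bookkeeping, your case split at $\sup v_n\le 1$ versus $\sup v_n>1$ being an equivalent reorganization of the paper's split at $v_n(x_n)\ge \tau_n/\tau_{hn}$. The only notable difference is that you justify attainment of the maxima explicitly through the asymptotics \eqref{bc:+inf}--\eqref{bc:-inf}, whereas the paper simply posits maximizers $x_n,x_h$.
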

\begin{rk}\label{rk:bdd:nolH2}
Since $(v_n,v_h)$ is a nonnegative solution of Eq. \eqref{eq:ode1}, then $(z_n,z_h)=(v_n-v_K,v_h)$ is a  solution of Eq. \eqref{eq:ode2}. Therefore, Lemma \ref{lem:bdd} implies that
$$
-1\leq z_n\leq \max\left\{1,\tau_{n}/\tau_{hn}\right\},\qquad 0\leq z_h\leq 1- \tau_h/\tau_{hn}.
$$

\end{rk}

\begin{proof}{\it of Lemma \ref{lem:bdd}.}
Let $(v_n,v_h)=(z_n+v_K,z_h)$ be a nonnegative solution of \eqref{eq:ode1} with $v_n\neq 0$, $v_h\neq 0$. Note that $v_n,v_h$ can only have simple zeros, in the sense that if $v_n(x)=0,$ then $v_n'(x)\neq 0$. Therefore $v_n,v_h>0$ on $\R.$

Let $x_n,x_h\in\R$ be such that $v_n(x_n)=\max_\R{v_n}>0$, $v_h(x_h)=\max_\R{v_h}>0.$ Then, $v'_i(x_i)=0,\ v''_i(x_i)\leq 0,$ $i\in \{n,h\}$ and by \eqref{eq:ode1} we obtain
$$
-D_hv_h''(x_h)=\dfrac{1}{\tau_{h}}(1-v_n(x_h)-v_h(x_h))v_h(x_h)
-\dfrac{1}{\tau_{hn}}v_h(x_h)\geq 0.
$$
Dividing by $v_h(x_h)>0,$ we get
$
1/\tau_{h}-1/\tau_{hn}
\geq \left(v_n(x_h)+v_h(x_h)\right)/\tau_h\geq v_h(x_h)/\tau_h
$
and the  inequalities related to $v_h$ in \eqref{def:Mv} are attained.

Likewise
$
-D_nv_n''(x_n)=\dfrac{1}{\tau_{n}}(1-v_n(x_n)-v_h(x_n))v_n(x_n)
+\dfrac{1}{\tau_{hn}}v_h(x_n)\geq 0,
$
and therefore
$
\dfrac{1}{\tau_{n}}\left(1-v_n(x_n)\right)v_n(x_n)
+\left(\dfrac{1}{\tau_{hn}}-\dfrac{1}{\tau_{n}}v_n(x_n)\right)v_h(x_n)\geq 0.
$

If $v_n(x_n)\geq \tau_{n}/\tau_{hn}$ then
$$
(1-v_n(x_n))v_n(x_n)/\tau_n \geq
\left(v_n(x_n)/\tau_n-1/\tau_{hn}\right)v_h(x_n)\geq 0,
$$
and dividing by $v_n(x_n)>0,$ we get
$v_n(x_n)\leq 1.$ Also $v_n(x_n)\leq \tau_{n}/\tau_{hn},$
and the  inequalities related to $v_n$ in \eqref{def:Mv} are attained, completing the proof.
\end{proof}

Finally, we move to one of our main results in this section that states necessary conditions for having  solutions bifurcating from the solution $(v_K,0)$ and positive in the second component, i.e. persistent finite amplitude solitons composed of cells with an hypoxic phenotype. Technically we use classical bifurcation methods. The convergence in any bounded interval is clear, and also that the limit function is a classical solution of the limit problem in all $\R.$ The difficulty comes when trying to prove that the convergence is, in fact, in all $\R$ and as a consequence, that the limit function is non-trivial.

\smallskip

To achieve it, we need uniform estimates of the $H^1$ norms in the exterior of bounded intervals. In particular, we have to control the behavior at $\pm\infty.$ We overcome this difficulty by using perturbation theory in this framework.

\medskip

The interesting asymptotic behavior is obtained when 
$\tau_{hn}>\tau_{h}$, and $c>\max\{c_n^*,c_{hn}^*\}$,
where $c_n^*,c_{hn}^*$ are defined in \eqref{def:cn*} and \eqref{def:chn*} respectively, see Appendix and Proposition \ref{prop:asymptotic}.

\begin{thm}\label{lem:cn:bif}

Let $(\La_j,z_j)\in \R^3\times H,$  be a sequence solving
\eqref{eq:ode2}, with $\La_j:=(\tau_{n,j},\tau_{h,j},\tau_{hn,j}) \in \R^3,$ and $z_j=(z_{n,j},z_{h,j})\in H$ with $z_{h,j}> 0$. Assume  that $\|z_j\|_{L^2(\R)^2}\to 0.$ Assume also that
$
\limsup_{j\to\infty}\tau_{h,j}=:\overline{\tau}_{h}>0,
$
that
$\lim_{j\to\infty}\tau_{hn,j}=:\overline{\tau}_{hn}>0,$ 
that
$\tau_{hn,j}>\tau_{h,j},$
and that
$ c>\max\{c_{n,j}^*,c_{hn,j}^*\},$
for some subsequence again denoted by $\tau_{hn,j}.$
Then,
$$\dfrac{2}{\overline{\tau}_{h}}\, -\dfrac{1}{\overline{\tau}_{hn}}\in\s\left(L_2\right).
$$

Moreover, at least for a subsequence,
$$
\dfrac{z_{h,j}}{\|z_{h,j}\|_{L^2(\R)}}\to \Phi_h> 0 \qq{in} H^1(\R),\qquad \|\Phi_h\|_{L^2(\R)}=1,
$$
and the function
$\Phi_h \in C^2(\R)\cap H^2(\R)$ satisfies
\begin{equation}\label{eq:phi2}
-D_h\Phi_h''-c\Phi_h' +\dfrac{1}{\overline{\tau}_{h}}\,\, v_K\Phi_h
=\left(\dfrac{1}{\overline{\tau}_{h}}\,-\dfrac{1}{\overline{\tau}_{hn}}\right)\Phi_h.
\end{equation}
\end{thm}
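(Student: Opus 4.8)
The plan is to read this as a \emph{necessary condition for bifurcation} from the trivial solution $(v_K,0)$, obtained by normalising the vanishing sequence and passing to the limit in the linearised $h$-equation. First I would pass to a subsequence along which $\tau_{h,j}\to\overline{\tau}_{h}$ and $\tau_{hn,j}\to\overline{\tau}_{hn}$ (both positive and finite), and along which the KPP profiles $v_{K,j}$ converge, so that the operators $L_{2,j}$ converge to the limiting operator $L_2$ built from $\overline{\tau}_{h}$ and $v_K$. Since $z_{h,j}>0$ is not identically zero, I set $w_{h,j}:=z_{h,j}/\|z_{h,j}\|_{L^2(\R)}$, so that $w_{h,j}>0$ and $\|w_{h,j}\|_{L^2(\R)}=1$. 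From Theorem \ref{lem:nol}(i) we have $\|z_j\|_{H^2(\R)^2}\le C\|z_j\|_{L^2(\R)^2}\to 0$ with $C$ uniform along the subsequence, and the embedding $H^1(\R)\hookrightarrow L^\infty(\R)$ gives $\|z_{n,j}\|_\infty,\|z_{h,j}\|_\infty\to 0$.

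Dividing the second equation of \eqref{eq:ode2} by $\|z_{h,j}\|_{L^2(\R)}$ yields $L_{2,j}w_{h,j}=\tau_{h,j}^{-1}(2-z_{n,j}-z_{h,j})w_{h,j}-\tau_{hn,j}^{-1}w_{h,j}=:g_j$. The right-hand side satisfies $\|g_j\|_{L^2(\R)}\le C$ uniformly, so the elliptic estimate of Theorem \ref{lem:a} applied to $L_{2,j}$ gives a uniform bound $\|w_{h,j}\|_{H^2(\R)}\le C$. Hence, up to a further subsequence, $w_{h,j}\rightharpoonup\Phi_h$ weakly in $H^2(\R)$ and strongly in $H^1_{loc}(\R)\cap C^1_{loc}(\R)$, with $\Phi_h\ge 0$. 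Passing to the limit in the equation, using $z_{n,j},z_{h,j}\to 0$ in $L^\infty$, $\tau_{h,j}\to\overline{\tau}_{h}$, $\tau_{hn,j}\to\overline{\tau}_{hn}$ and $L_{2,j}\to L_2$, shows that $\Phi_h$ is a weak, hence (by Theorem \ref{lem:a}) classical, solution of \eqref{eq:phi2}, equivalently $L_2\Phi_h=(2/\overline{\tau}_{h}-1/\overline{\tau}_{hn})\Phi_h$.

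The crux, and the step I expect to be the main obstacle, is to show that $\Phi_h\not\equiv 0$, i.e.\ that no $L^2$-mass escapes to $\pm\infty$ under weak convergence; local strong convergence alone gives only $\|\Phi_h\|_{L^2}\le 1$. To rule out loss of mass I would establish a uniform (in $j$) exponential decay estimate $|w_{h,j}(\xi)|\le Ce^{-\delta|\xi|}$ for $|\xi|\ge R$, with $C,\delta,R$ independent of $j$. This is where the perturbation theory behind Proposition \ref{prop:asymptotic} enters: along the subsequence the relevant eigenvalues $\la_{2\pm}$ at $+\infty$ and $\widehat{\la}_{2+}$ at $-\infty$ (see \eqref{def:eig:z}, \eqref{def:eig:z2}) stay uniformly bounded away from the imaginary axis, since $c>\max\{c_{n,j}^*,c_{hn,j}^*\}$ and $\tau_{hn,j}>\tau_{h,j}$ pass to the limit; and for $|\xi|\ge R$ the coefficient perturbation $B(\xi)$ and the nonlinearity are uniformly small because $v_{K,j}$ and $z_j$ decay uniformly. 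An exponential-dichotomy argument, or, exploiting $z_{h,j}>0$, a comparison with the exponential supersolutions $Ce^{\la_{2+}\xi}$ on $[R,\infty)$ and $Ce^{\widehat{\la}_{2+}\xi}$ on $(-\infty,-R]$, then delivers the uniform decay. Dividing by $\|z_{h,j}\|_{L^2}$ makes the bound uniform for $w_{h,j}$ and yields tightness, $\int_{|\xi|>R}|w_{h,j}|^2\,d\xi\to 0$ uniformly in $j$ as $R\to\infty$.

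Combining tightness with the local strong convergence upgrades to strong convergence $w_{h,j}\to\Phi_h$ in $L^2(\R)$, whence $\|\Phi_h\|_{L^2}=1$ and $\Phi_h\not\equiv 0$. Then $g_j\to (2/\overline{\tau}_{h}-1/\overline{\tau}_{hn})\Phi_h$ in $L^2(\R)$, and the continuous-dependence result of Corollary \ref{cor:a} (with the minor adaptation that the operators $L_{2,j}$ also converge) upgrades $w_{h,j}\to\Phi_h$ to $H^2(\R)$, in particular to $H^1(\R)$. Finally, since $\Phi_h\ge 0$, $\Phi_h\not\equiv 0$ solves the linear elliptic equation \eqref{eq:phi2}, the strong maximum principle forces $\Phi_h>0$; and the identity $L_2\Phi_h=(2/\overline{\tau}_{h}-1/\overline{\tau}_{hn})\Phi_h$ with $\Phi_h\in H^2(\R)$ nontrivial shows $2/\overline{\tau}_{h}-1/\overline{\tau}_{hn}\in\s(L_2)$, completing the proof.
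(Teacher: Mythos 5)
Your proposal is correct and follows essentially the same route as the paper: normalise $z_{h,j}$ by its $L^2$ norm, get uniform $H^2$ bounds and local compactness, then rule out loss of mass at $\pm\infty$ by uniform-in-$j$ exponential decay of $(\Phi_{hj},\Phi'_{hj})$ coming from the perturbed linear ODE system and the spectral gap guaranteed by $c>\max\{c_{n,j}^*,c_{hn,j}^*\}$ and $\tau_{hn,j}>\tau_{h,j}$ (the paper invokes Coddington--Levinson perturbation theory, i.e.\ the dichotomy argument you describe), and finally pass to the limit, apply the maximum principle for $\Phi_h>0$, and read off the eigenvalue relation $L_2\Phi_h=\bigl(2/\overline{\tau}_{h}-1/\overline{\tau}_{hn}\bigr)\Phi_h$. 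The only cosmetic differences are that the paper uses the Kolmogorov--Riesz--Fr\'echet theorem where you use weak $H^2$ compactness plus local Rellich, and that you make the uniformity in $j$ of the decay constants explicit where the paper leaves it implicit.
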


\begin{proof} Let us choose a subsequence such that
$(\tau_{n,j},\tau_{h,j},\tau_{hn,j},\|z_j\|_{L^2(\R)^2}) \to ( \overline{\tau}_{n}, \overline{\tau}_{h}, \overline{\tau}_{hn},0).$ Moreover,
$\|z_{j}\|_{H}\le C\|z_{j}\|_{L^2(\R)^2}\to 0.$ Also, see Theorem \ref{lem:a} and Remark
\ref{rk:bdd:nolH2},
$\|z_{j}\|_{H^2(\R)^2}\le C\|z_{j}\|_{L^2(\R)^2}\to 0.$

Let $\Phi_{hj}:=z_{h,j}/\|z_{h,j}\|_{L^2(\R)}\ge 0.$ Then $\|\Phi_{hj}\|_{L^2(\R)}=1$ and
\begin{equation}\label{eq:phi:norm0}
\|\Phi_{hj}\|_{L^\infty(\R)}\le \|\Phi_{hj}\|_{H^1(\R)}\le C,\qq{and}
\|\Phi_{hj}\|_{H^2(\R)}\le C.
\end{equation}

From the Kolmogorov-Riesz-Fr\'echet's Theorem, see
\citep[Theorem 4.26 and Corollary 4.27]{Brezis}, for any bounded
interval $I$ fixed, there exists a subsequence  $\Phi_{hj}\to \Phi_h$ in $H^1(I).
$
If moreover, for any $\e>0,$ there exists a bounded interval $I=I(\e)$ such that
\begin{equation}\label{ineq:phi}
\|\Phi_{hj}\|_{H^1(\R\setminus I)}<\e,
\end{equation}
then, there exists a subsequence such that
$\Phi_{hj}\to \Phi_h$ in $H^1(\R).
$

Let us fix   $R_0>0,$ then $\Phi_{hj}\to \Phi_{h}$ in $H^1(-R_0,R_0),$
where  $\Phi_{h}$ depends on $R_0,$ (we shall denote it by $\Phi_{h,R_0}$ when we need  to remark this dependence).
In order to achieve \eqref{ineq:phi},  let us divide Eq. \eqref{bb} by $\|z_{2j}\|_{L^2(\R)}$,  then we obtain
\begin{equation}\label{eq:phi2j}
L_2\Phi_{hj}=(2-z_{n,j}-z_{h,j})\Phi_{hj}/\tau_{h,j}
-Phi_{hj}/\tau_{hn,j}\
\end{equation}

Let $\zeta_j=(\Phi_{hj},\Phi'_{hj})$ be the solution of the associated IVP
\begin{eqnarray}\label{eq:ode0}
\dfrac{d\zeta}{d\xi} &=& A\zeta+B(\xi)\zeta+(A_j-A)\zeta+ [B_j(\xi)-B(\xi)]\zeta, \nonumber\\
\zeta(0)&=& (\Phi_{hj}(0),\Phi'_{hj}(0)),
\end{eqnarray}
where
\begin{subequations}\label{def:F0}
\begin{equation*}
A_j :=
\left(\begin{array}{cc}  0 & 1\\
\dfrac{1}{D_h} \left(\dfrac{1}
{\tau_{hn,j}}-\dfrac{1}{\tau_{h,j}}
\right)
& \, - \dfrac{c}{D_h}
\end{array}\right),\ \
A :=
\left(\begin{array}{cc}  0 & 1\\
\dfrac{1}{D_h} \left(\dfrac{1}{\overline{\tau}_{hn}}-\dfrac{1}{\overline{\tau}_{h}}
\right)
& \, - \dfrac{c}{D_h}
\end{array}\right),
\end{equation*}
\begin{equation*}
B_j(\xi) :=
\left(\begin{array}{cc}  0 & 0 \\
 \dfrac{v_K}{D_h\tau_{h,j}}
+\dfrac{z_{n,j}+z_{h,j}}{D_h\, \tau_{h,j}}\,\,
& 0
\end{array}\right),\qquad
B(\xi) :=
\left(\begin{array}{cc}  0 & 0 \\
 \dfrac{v_K}{D_h\overline{\tau}_{h}}
& 0
\end{array}\right),\
\end{equation*}
\end{subequations}
and $A_j\to A$ as $j\to\infty$, $B_j(\xi)\to B(\xi)$   as $j\to\infty$ for each $\xi\in\R$. Moreover,
using \eqref{bc:vK:+-inf} and since $(z_{n,j},z_{h,j})\to (0,0)$ as $j\to\infty$, see Theorem \ref{lem:nol}, we get
\begin{subequations}
\begin{eqnarray}
\lim_{\xi\to+\infty}B_j(\xi) & =: & B_j(+\infty) =
\left(\begin{smallmatrix} 0 & 0  \\
 0 & 0
\end{smallmatrix}\right),
\\
\lim_{\xi\to-\infty}B_j(\xi) & =: & B_j(-\infty) =
\begin{pmatrix} 0 & 0  \\
1/(D_h\tau_{h,j}) & 0
\end{pmatrix}
\end{eqnarray}
\end{subequations}

Eq. \eqref{eq:ode0} is an homogeneous linear problem. From classical perturbation theory (see \cite{Coddington-Levinson}), we get the asymptotic behavior
\begin{subequations}\label{phi2:inf}
\begin{eqnarray}\label{phi2:infa}\limsup_{\xi\to+\infty}\dfrac{\log |(\Phi_{hj}(\xi),\Phi'_{hj}(\xi))|}{\xi}\leq
(\la_{2+})+\e<0,
\end{eqnarray}
and
\begin{eqnarray}\label{phi2:infb}\liminf_{\xi\to+\infty}\dfrac{\log |(\Phi_{hj}(-\xi),\Phi'_{hj}(-\xi))|}{-\xi}\geq
(\widehat{\la}_{2+})-\e>0.
\end{eqnarray}
\end{subequations}
where $\la_{2+},\widehat{\la}_{2+}$ are defined in \eqref{def:eig:z},\eqref{def:eig:z2} respectively.

\medskip

Therefore \eqref{ineq:phi} is accomplished,  $\Phi_{hj}\to \Phi_h$ in $H^1(\R)$ and $\Phi_h$ solves \eqref{eq:phi2}.
In particular
$\|\Phi_h\|_{L^2(\R)}=1$ which implies $\Phi_h\neq 0$, and then $\Phi_h\gneqq 0.$ The Maximum Principle implies that $\Phi_h> 0.$

\medskip

Multiplying \eqref{eq:phi2} by $\Phi_h$ and integrating on $\R$ we obtain
$$
D_h\int_\R (\Phi_h')^2 + \dfrac{1}{\overline{\tau}_{h}} \int_\R  v_{K}\Phi_h^2= \left(\frac{1}{\overline{\tau}_{h}}\, -\frac{1}{\overline{\tau}_{hn}}\right)\int_\R \Phi_h^2,
$$
so $1/\overline{\tau}_{h}  -1/\overline{\tau}_{hn}\ge 0$
which completes the proof.
\end{proof}

\bigskip

\begin{rk} The asymptotic behavior provided by \eqref{phi2:infa} is un upper bound. In the same sprit as Proposition \ref{prop:asymptotic}, (ii),  
there exists a real
one-dimensional
manifold $S_1$ containing the origin, such that any solution $(\Phi_{h},\Phi'_{h})$  of  
\begin{equation}\label{eq:ode10}
\dfrac{d\zeta}{d\xi} = A\zeta+B(\xi)\zeta,\qquad \zeta(0)= (\Phi_{h}(0),\Phi'_{h}(0)),
\end{equation}
where $A,B$ are defined in \eqref{def:F0}, and
with $ (\Phi_{h}(\xi_0),\Phi'_{h}(\xi_0))\in S_1
$ for any $\xi_0,$ satisfy $\Phi_{h}(\xi),\Phi'_{h}(\xi)\to 0$ as $\xi\to\infty,$ and
\begin{equation}\label{onda:la:2-}
\limsup_{\xi\to\infty}\dfrac{\log |(\Phi_{hj}(\xi),\Phi'_{hj}(\xi))|}{\xi}\leq
(\la_{2-})+\e<0,
\end{equation}
where $\la_{2-}$ is defined in \eqref{def:eig:z}.
\end{rk}

From a theoretical point of view, the hypoxic front wave can have two 'components' when $\xi\to\infty$, one corresponding to the biggest eigenvalue $\la_{2+}$, and the other one to the smallest eigenvalue $\la_{2-}$. From a quantitative point of view,  the component corresponding to the biggest eigenvalue $\la_{2+}$, is much bigger than  the component corresponding to the smallest eigenvalue $\la_{2-}$.
The above result states that the speed wave of the component of the hypoxic front wave corresponding to the smallest eigenvalue  converges to $c_{hn}^*.$

\begin{cor}\label{cor:main} 

Let $(\La_j,z_j)\in \R^4\times H,$  be a sequence solving
\eqref{eq:ode2}, with $\La_j:=(\tau_{n,j},\tau_{h,j},\tau_{hn,j},c_j) \in \R^4,$ and $z_j=(z_{n,j},z_{h,j})\in H$ with $z_{h,j}> 0$. Assume  that $\|z_j\|_{L^2(\R)^2}\to 0.$ Assume also that
$
\limsup_{j\to\infty}\tau_{h,j}=:\overline{\tau}_{h}>0,
$
that
$\lim_{j\to\infty}\tau_{hn,j}=:\overline{\tau}_{hn}>0,$
that
$\tau_{hn,j}>\tau_{h,j},$
that
$ c_j>\max\{c_{n,j}^*,c_{hn,j}^*\},$
and  that
$\lim_{j\to\infty}c_{j}=:\overline{c}>0,$ for some subsequence again denoted by $\La_j.$

Let us keep the notation of Theorem \ref{lem:cn:bif}.

\smallskip

Assume that for any $\e>0$ there exists a $j_0$ such that if $j\ge j_0$ then 
\begin{equation}\label{onda:la:2-j}
\limsup_{\xi\to\infty}\dfrac{\log |(\Phi_{hj}(\xi),\Phi'_{hj}(\xi))|}{\xi}\leq
(\la_{2-,j})+\e<0,
\end{equation}
where $\la_{2-,j}$ are defined in \eqref{def:eig:z} for $(\tau_{h},\tau_{hn},c)=(\tau_{h,j},\tau_{hn,j},c_j)$, 
then
\begin{equation}\label{def:chn*:2}
\overline{c}= c_{hn}^*= 2\sqrt{D_h\left(\frac{1}{\overline{\tau}_{h}}\, -\frac{1}{\overline{\tau}_{hn}}\right)}.
\end{equation}

\end{cor}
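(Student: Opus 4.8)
The plan is to run the same compactness/bifurcation scheme as in Theorem \ref{lem:cn:bif}, but now extracting the speed as well, and then to use the sign of $v_K$ to pin down the decay rate of the limiting hypoxic profile at $+\infty$. First I would pass to a subsequence so that $(\tau_{n,j},\tau_{h,j},\tau_{hn,j},c_j)\to(\overline{\tau}_n,\overline{\tau}_h,\overline{\tau}_{hn},\overline{c})$ and, exactly as in Theorem \ref{lem:cn:bif}, obtain that the renormalized profiles $\Phi_{hj}=z_{h,j}/\|z_{h,j}\|_{L^2}$ converge in $H^1(\R)$ to a limit $\Phi_h>0$ with $\|\Phi_h\|_{L^2}=1$ solving the limiting linear equation \eqref{eq:phi2} with speed $\overline c$ and eigenvalue $\mu=2/\overline{\tau}_h-1/\overline{\tau}_{hn}$. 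Since $c_j>c_{hn,j}^*$ and $c_{hn}^*$ depends continuously on the parameters through \eqref{def:chn*}, one gets $\overline c\ge c_{hn}^*$ for free; the whole content is therefore to prove the reverse inequality $\overline c\le c_{hn}^*$.

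The heart of the argument is that a positive profile cannot decay at the fast rate $\lambda_{2-}$ as $\xi\to+\infty$ unless the two rates coincide. To see this I would remove the drift from \eqref{eq:phi2} by the substitution $\Phi_h=e^{-(\overline c/2D_h)\xi}\psi$, which produces the self-adjoint eigenvalue problem $-D_h\psi''+W\psi=(1/\overline{\tau}_h-1/\overline{\tau}_{hn})\psi$ with potential $W(\xi)=\overline c^{\,2}/(4D_h)+v_K/\overline{\tau}_h$. Because $v_K\ge 0$ one has $W\ge W_+:=\overline c^{\,2}/(4D_h)=\inf\sigma_{\mathrm{ess}}$, so the Rayleigh quotient forces any $L^2$ eigenfunction to have energy $\ge W_+$; the threshold value $W\equiv W_+$ is impossible since there $\psi$ would be positive and convex and hence could not decay at both ends, while energy $>W_+$ produces oscillatory, non-$L^2$ behaviour at $+\infty$. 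Thus $-D_h\partial^2+W$ admits no $L^2$ ground state, so $\psi\notin L^2(\R)$, i.e. $\Phi_h$ decays no faster than $e^{-(\overline c/2D_h)\xi}$. By the dichotomy of \eqref{def:eig:z}, the only admissible rates are $\lambda_{2\pm}$, and for $\overline c>c_{hn}^*$ one has $\lambda_{2-}<-\overline c/(2D_h)<\lambda_{2+}$; this excludes the fast rate and forces $\Phi_h$ to decay at the slow rate $\lambda_{2+}$.

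It then remains to confront this slow-rate lower bound with the hypothesis \eqref{onda:la:2-j}, which asserts that for $j$ large the profiles decay at least as fast as $\lambda_{2-,j}$. Reconciling the two gives, for the finite-$j$ profiles, $\lambda_{2+,j}\le\lambda_{2-,j}+\e$; since $\lambda_{2+,j}-\lambda_{2-,j}=D_h^{-1}\sqrt{c_j^{\,2}-4D_h(1/\tau_{h,j}-1/\tau_{hn,j})}$ and $\e$ may be taken arbitrarily small, the discriminant is squeezed to zero, whence $\overline c^{\,2}=4D_h(1/\overline{\tau}_h-1/\overline{\tau}_{hn})$, that is $\overline c=c_{hn}^*$, which is precisely \eqref{def:chn*:2}. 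The main obstacle I expect is making the slow-rate lower bound rigorous at the level of the finite-$j$ profiles rather than only for the limit $\Phi_h$: the no-ground-state computation is clean for the linear equation \eqref{eq:phi2}, but for finite $j$ the profile solves the nonlinear system, so one must transfer the Rayleigh-quotient/convexity argument through the vanishing nonlinear and coupling terms near $+\infty$, or equivalently show that the trajectory does not lie on the one-dimensional strong-stable manifold $S_1$ for speeds $c_j$ strictly above $c_{hn,j}^*$. Controlling the interplay of the two quantifiers (the $\e$ of \eqref{onda:la:2-j} against the index $j$) so that the per-$j$ estimates genuinely combine into a statement about $\overline c$ is the other delicate point, and is exactly where the uniform exponential tail bounds of Proposition \ref{prop:asymptotic} must be brought to bear.
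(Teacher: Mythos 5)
Your central mechanism is exactly the paper's: the drift-removing substitution $\varphi=e^{(\overline{c}/2D_h)\xi}\Phi_h$, the resulting self-adjoint problem $-D_h\varphi''+\left(\overline{c}^{\,2}/(4D_h)+v_K/\overline{\tau}_{h}\right)\varphi=\s\varphi$ with $\s=1/\overline{\tau}_{h}-1/\overline{\tau}_{hn}$, and the Rayleigh-quotient bound $\s\ge\overline{c}^{\,2}/(4D_h)$ forced by $v_K\ge 0$. The problem is how you assemble it, and you flag the gap yourself: your spectral ("no $L^2$ ground state") step is established only for the limit profile $\Phi_h$, whereas your concluding step compares decay rates \emph{for the finite-$j$ profiles}, claiming $\la_{2+,j}\le\la_{2-,j}+\e$, which requires a per-$j$ slow-decay lower bound on $\Phi_{hj}$ that you never prove. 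As written, the squeeze on the discriminant does not follow, so the proposal is incomplete.

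The paper closes the loop more economically, entirely at the limit level, by contradiction: assume $\overline{c}>c_{hn}^*$; then the discriminant in \eqref{def:eig:z} is strictly positive, so $\la_{2-}<-\overline{c}/(2D_h)$, and the decay hypothesis \eqref{onda:la:2-j} (read, as the paper does, for the limit profile) gives $\limsup_{\xi\to+\infty}\xi^{-1}\log|(\Phi_h(\xi),\Phi_h'(\xi))|\le-\overline{c}/(2D_h)-\e$ for some $\e>0$; hence $\varphi=e^{(\overline{c}/2D_h)\xi}\Phi_h\in H^1(\R)$, and the variational inequality yields $\s\ge\overline{c}^{\,2}/(4D_h)$, i.e.\ $\overline{c}\le c_{hn}^*$, a contradiction. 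In particular your digression excluding threshold and embedded eigenvalues is unnecessary: once the hypothesis puts $\varphi$ in $H^1$, the Rayleigh bound alone kills the assumption $\overline{c}>c_{hn}^*$. Incidentally, the obstacle you fear at finite $j$ is not real either: Eq. \eqref{bb} can be rewritten as $-D_hz_{h,j}''-c_jz_{h,j}'+(v_K+z_{n,j}+z_{h,j})z_{h,j}/\tau_{h,j}=\left(1/\tau_{h,j}-1/\tau_{hn,j}\right)z_{h,j}$, and $v_K+z_{n,j}+z_{h,j}=v_{n,j}+v_{h,j}\ge 0$ for the nonnegative waves, so your gauge/Rayleigh argument runs verbatim at each fixed $j$ with potential bounded below by $c_j^2/(4D_h)$. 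Either repair (the paper's limit-level contradiction, or this finite-$j$ version) completes the argument, but one of them must actually be carried out.
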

\begin{proof}
By hypothesis we have that
$c_j>\max\{c_{n,j}^*,c_{hn,j}^*\}   \ge c_{hn,j}^*. $ We shall argue by contradiction, assuming that $\overline{c}> c_{hn}^*.$ 
By definition of   $\la_{2-}$, see \eqref{def:eig:z}, we can state that $\la_{2-}<-\overline{c}/2D_h,$
therefore, taking into account \eqref{onda:la:2-},  there exists some $\e>0$ such that
\begin{equation}
\limsup_{\xi\to+\infty}\dfrac{\log |(\Phi_{h}(\xi),\Phi'_{h}(\xi))|}{\xi}\leq
-\frac{\overline{c}}{2D_h}-\e<0,
\end{equation}

Let $\varphi:=e^{\alpha \xi} \Phi_h$  for some $\al$ to be determined later.
Differentiating twice, substituting into the equation  (\ref{eq:phi2}), and rearranging terms we can write
\begin{equation}\label{eq:der:KPP6}
-D_h \varphi''+\left(2D_h\al-\overline{c}\right)\varphi'+\left(-D_h\al^2+\overline{c}\al+ v_{K}/\overline{\tau}_{h}\right)\varphi=\s\varphi, \text{for} \ \xi\in\R,
\end{equation}
where $\s=1/\overline{\tau}_{h}\, -1/\overline{\tau}_{hn}$. Choosing $\al:=\overline{c}/2D_h$ we get 
$$\limsup_{\xi\to+\infty}\dfrac{\log |(\varphi(\xi),\varphi'(\xi))|}{\xi}\leq -\e<0,$$ therefore $\varphi\in H^1(\R)$ and $
-D_h \varphi''+\left(\frac{\overline{c}^2}{4D_h}+\dfrac{1}{\overline{\tau}_{h}} v_{K}\right)\varphi=\s\varphi$, for $\xi\in\R.$
We can consider this as a self-adjoint problem in $L^2(\R),$ regardless of the original space, so that any eigenvalue must be real. The smallest eigenvalue is
$$\s=\inf_{\varphi\in H^1(\R):\int_\R \varphi^2=1} \left\{D_h\int_\R (\varphi')^2 + \int_\R \left(\frac{\overline{c}^2}{4D_h}+\dfrac{1}{\overline{\tau}_{h}} v_{K}\right)\varphi^2\right\}\ge \frac{\overline{c}^2}{4D_h},
$$
and therefore $\s \ge \overline{c}^2/4D_h.$ Taking into account that  $\s=1/\overline{\tau}_{h}\, -1/\overline{\tau}_{hn}$, we obtain $1/\overline{\tau}_{h}\, -1/\overline{\tau}_{hn} \ge \overline{c}^2/4D_h$ or equivalently, that $\overline{c} \le  2\sqrt{D_h\left(1/\tau_{h}-1/\tau_{hn}\right)}=: c_{hn}^* ,$ which contradicts the hypothesis, ending the proof.
\end{proof}

\bigskip

With respect to the first component, we can prove the following result.

\begin{cor} If the hypothesis of Theorem \ref{lem:cn:bif},
hold, then
$$\dfrac{z_{n,j}}{\|z_{j}\|_{L^2(\R)^2}}\to \Phi_n \qq{in} H^1(\R),
$$
where $\|z\|_{L^2(\R)^2}=\left(\|z_{n}\|_{L^2(\R)^2}^2
+\|z_{h}\|_{L^2(\R)^2}^2\right)^{1/2}.$

Moreover, 
$\Phi_n \in C^2(\R)\cap H^2(\R)$ is such that $\Phi_n\neq 0,$ and for some $\te\in[0,1)$
\begin{equation}\label{eq:phi1}
-D_n\Phi_n''-c\Phi_n' +\dfrac{2}{\overline{\tau}_{n}}v_K\Phi_n
=\dfrac{1}{\overline{\tau}_{n}} \Phi_n+\te\left(\dfrac{1}{\overline{\tau}_{hn}}
-\dfrac{1}{\overline{\tau}_{n}}v_K\right) \Phi_h.
\end{equation}

\end{cor}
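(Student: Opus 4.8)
The plan is to follow the scheme of Theorem \ref{lem:cn:bif}, now extracting the limit of the \emph{first} component under the global normalization $\|z_j\|_{L^2(\R)^2}$. First I would set $\Psi_{n,j}:=z_{n,j}/\|z_j\|_{L^2(\R)^2}$ and $\Psi_{h,j}:=z_{h,j}/\|z_j\|_{L^2(\R)^2}$, so that $\|\Psi_{n,j}\|_{L^2}^2+\|\Psi_{h,j}\|_{L^2}^2=1$. Writing $\beta_j:=\|z_{h,j}\|_{L^2}/\|z_j\|_{L^2}\in[0,1]$ and passing to a further subsequence with $\beta_j\to\theta$, the convergence $\Phi_{hj}\to\Phi_h$ in $H^1(\R)$ obtained in Theorem \ref{lem:cn:bif} gives $\Psi_{h,j}=\beta_j\Phi_{hj}\to\theta\,\Phi_h$ in $H^1(\R)$. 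Dividing \eqref{eq:ode2}(a) for the $j$-th datum by $\|z_j\|_{L^2}$ and absorbing the linear part of the nonlinearity yields
\begin{equation*}
-D_n \Psi_{n,j}'' - c\Psi_{n,j}' + \frac{2}{\tau_{n,j}} v_K \Psi_{n,j} - \frac{1}{\tau_{n,j}}\Psi_{n,j}
= -\frac{1}{\tau_{n,j}}(z_{n,j}+z_{h,j})\Psi_{n,j} + \Big(\frac{1}{\tau_{hn,j}} - \frac{1}{\tau_{n,j}}v_K\Big)\Psi_{h,j},
\end{equation*}
which is the identity I intend to pass to the limit.

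By Theorem \ref{lem:nol}(i) the uniform bound $\|z_j\|_{H^2(\R)^2}\le C\|z_j\|_{L^2(\R)^2}$ (with $C$ uniform along the subsequence, since $\tau_{n,j}\to\overline{\tau}_{n}$ and $\tau_{h,j}\to\overline{\tau}_{h}$) gives $\|\Psi_{n,j}\|_{H^2(\R)}\le C$. Hence, by the Kolmogorov--Riesz--Fr\'echet theorem, I can extract a subsequence with $\Psi_{n,j}\to\Phi_n$ in $H^1$ on every bounded interval, together with weak $H^2$ convergence. Because $\|z_j\|_{H^2}\to0$ forces $z_{n,j},z_{h,j}\to0$ in $L^\infty$, the first term on the right tends to $0$ in $L^2$, while the second tends to $\theta\big(\tfrac{1}{\overline{\tau}_{hn}}-\tfrac{1}{\overline{\tau}_{n}}v_K\big)\Phi_h$; passing to the limit produces exactly \eqref{eq:phi1}. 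The regularity $\Phi_n\in C^2(\R)\cap H^2(\R)$ then follows from Theorem \ref{lem:a} applied to the limit equation, whose right-hand side lies in $L^2(\R)\cap C(\R)$.

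The main obstacle, precisely as for $\Phi_h$, is upgrading local $H^1$ convergence to convergence in $H^1(\R)$, i.e. establishing a uniform tail estimate $\|\Psi_{n,j}\|_{H^1(\R\setminus I)}<\e$ of the type \eqref{ineq:phi}. I would obtain it from Proposition \ref{prop:asymptotic}: since the parameters converge and the strict inequality $c>\max\{c_n^*,c_{hn}^*\}$ is preserved in the limit, the eigenvalues $\la_{1\pm,j},\la_{2\pm,j}$ and $\widehat{\la}_{1\pm,j},\widehat{\la}_{2\pm,j}$ stay uniformly bounded away from the imaginary axis, so the vectors $Z_j=(z_{n,j},z_{n,j}',z_{h,j},z_{h,j}')$ decay exponentially at $+\infty$ and are exponentially controlled at $-\infty$ with rates uniform in $j$, exactly as in \eqref{phi2:inf}. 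As these rates govern the scale-invariant quantity $\log|Z_j(\xi)|/\xi$, the normalized $\Psi_{n,j}$ inherit the same uniform exponential bounds, which delivers the tail estimate and hence $\Psi_{n,j}\to\Phi_n$ in $H^1(\R)$.

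Finally I would verify the genuinely new point, namely $\Phi_n\neq0$ and $\theta\in[0,1)$. Global convergence gives $\|\Phi_n\|_{L^2}=\lim_j\|\Psi_{n,j}\|_{L^2}=\sqrt{1-\theta^2}$. If $\theta=1$ then $\Phi_n\equiv0$, and \eqref{eq:phi1} degenerates to $\big(\tfrac{1}{\overline{\tau}_{hn}}-\tfrac{1}{\overline{\tau}_{n}}v_K\big)\Phi_h\equiv0$ on $\R$; but $\Phi_h>0$ everywhere by Theorem \ref{lem:cn:bif} and $v_K$ is non-constant, so the coefficient $\tfrac{1}{\overline{\tau}_{hn}}-\tfrac{1}{\overline{\tau}_{n}}v_K$ cannot vanish identically, a contradiction. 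Therefore $\theta\in[0,1)$ and $\|\Phi_n\|_{L^2}=\sqrt{1-\theta^2}>0$, so $\Phi_n\neq0$, which completes the argument.
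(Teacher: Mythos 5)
Your proposal is correct and follows essentially the same route as the paper's proof: the same global normalization $\|z_j\|_{L^2(\R)^2}$, the same extraction of $\theta=\lim\|z_{h,j}\|_{L^2}/\|z_j\|_{L^2}$, local compactness via Kolmogorov--Riesz--Fr\'echet, tail control through the perturbed first-order ODE system and its uniform exponential rates (the paper writes this as a non-homogeneous IVP for $(\Phi_{nj},\Phi_{nj}')$ and invokes variation of parameters, which is the same mechanism you describe), and the same contradiction argument ruling out $\theta=1$. Your version even fills in a detail the paper leaves implicit, namely why $\Phi_n\equiv 0$ is incompatible with \eqref{eq:phi1}: since $\Phi_h>0$ and $v_K$ is non-constant, the coefficient $1/\overline{\tau}_{hn}-v_K/\overline{\tau}_{n}$ cannot vanish identically.
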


\begin{proof}
Let $\Phi_{nj}:=z_{n,j}/\|z_{j}\|_{L^2(\R)^2},$ where
$\|z\|_{L^2(\R)^2}^2=\|z_{n}\|_{L^2(\R)^2}^2+\|z_{h}\|_{L^2(\R)^2}^2.$ Then $\|\Phi_{nj}\|_{L^2(\R)}\leq 1$ and
\begin{equation}\label{eq:phi:norm0:1}
\|\Phi_{nj}\|_{L^\infty(\R)}\le \|\Phi_{nj}\|_{H^1(\R)}\le C,\qq{and}
\|\Phi_{nj}\|_{H^2(\R)}\le C.
\end{equation}

From the Kolmogorov-Riesz-Fr\'echet's Theorem, see
\citet[Theorem 4.26 and Corollary 4.27]{Brezis}, for any bounded
interval $I$ fixed, there exists a subsequence
$\Phi_{nj}\to \Phi_n \text{in }H^1(I).
$ Let us fix   $R_0>0,$ then
$\Phi_{nj}\to \Phi_{n,R_0}  \text{ in }H^1(-R_0,R_0).
$

Obviously $\|z_{h,j}\|_{L^2(\R)}/\|z_{j}\|_{L^2(\R)^2}\leq 1.$
Then, at least for a subsequence,
$\frac{\|z_{h,j}\|_{L^2(\R)}}{\|z_{j}\|_{L^2(\R)^2}}\to\te\in
[0,1], \text{and therefore,}$
$\|\Phi_{n,j}\|_{L^2(\R)}=
\frac{\|z_{1j}\|_{L^2(\R)}}{\|z_{j}\|_{L^2(\R)^2}}
\to\sqrt{1-\te^2}.$

Dividing  by $\|z_{j}\|_{L^2(\R)}$ the first
equation of \eqref{eq:ode2},  we obtain
\begin{equation}\label{eq:phi1j}
L_1\Phi_{nj}=\dfrac{1}{\tau_{n,j}}(2-z_{n,j}-z_{h,j})\Phi_{nj}
+\left(\dfrac{1}{\tau_{hn,j}}-\dfrac{v_K}{\tau_{n,j}}\right)
\te\Phi_{hj}.
\end{equation}

Let now
$\zeta=(\Phi_{nj}, \Phi'_{nj})$ be the solution of the associated  IVP
\begin{equation}\label{eq:ode00}
\dfrac{d\zeta}{d\xi} =A\zeta+B(\xi)\zeta+F(\xi)+(A_j-A)\zeta+ [B_j(\xi)-B(\xi)]\zeta+[F_j(\xi)-F(\xi)],
\end{equation}
with $\zeta(0)=(\Phi_{nj}(0),
\Phi'_{nj}(0)),$ 
where
\begin{equation*}\label{def:F00}
A_j :=
\left(\begin{array}{cc}  0 & 1\\
-\dfrac{1}{D_n\, \tau_{n,j}}
& \, - \dfrac{c}{D_n}
\end{array}\right),\ \
A :=
\left(\begin{array}{cc}  0 & 1\\
-\dfrac{1}{D_n\,\overline{\tau}_{n}}
& \, - \dfrac{c}{D_n}
\end{array}\right),
\end{equation*}
\begin{equation*}\label{def:F01}
B_j(\xi) :=
\left(\begin{array}{cc}  0 & 0 \\
 \dfrac{2v_K}{D_n\tau_{n,j}}
+\dfrac{z_{n,j}+z_{h,j}}{D_n\, \tau_{n,j}}\,\,
& 0
\end{array}\right),\qquad
B(\xi) :=
\left(\begin{array}{cc}  0 & 0 \\
 \dfrac{2v_K}{D_n\overline{\tau}_{n}}
& 0
\end{array}\right),\
\end{equation*}
\begin{equation*}\label{def:F02}
F_j(\xi):=
\left(\begin{array}{c}  0  \\
\theta\left(\dfrac{v_K}{D_n\tau_{n,j}} -\dfrac{1}{\tau_{hn,j}}\right)\Phi_{hj}
\end{array}\right),\
F(\xi):=
\left(\begin{array}{c}  0  \\
\theta\left(\dfrac{v_K}{D_n\overline{\tau}_{n}} -\dfrac{1}{\overline{\tau}_{hn}}\right)\Phi_{h}
\end{array}\right),\
\end{equation*}
and $A_j\to A$ as $j\to\infty$; $B_j(\xi)\to B(\xi)$,  $F_j(\xi)\to F(\xi)$ as $j\to\infty$ for each $\xi\in\R$. Moreover,
using \eqref{bc:vK:+-inf} and the fact that $(z_{n,j},z_{h,j})\to (0,0)$ as $j\to\infty$, see Theorem \ref{lem:nol}, we get
\begin{subequations}
\begin{eqnarray}
\lim_{\xi\to+\infty}B_j(\xi) & =: & B(+\infty) =
\left(\begin{smallmatrix} 0 & 0  \\
 0 & 0
\end{smallmatrix}\right),
\\
\lim_{\xi\to-\infty}B_j(\xi) & =: & B_j(-\infty) =
\begin{pmatrix} 0 & 0  \\
2/(D_n\tau_{n,j}) & 0
\end{pmatrix}.
\end{eqnarray}
\end{subequations}
From \eqref{phi2:inf}, we get
$$\limsup_{\xi\to+\infty}\dfrac{\log |F_j(\xi)|}{\xi}\leq
(\la_{2+})+\e<0,
$$
and
$$\liminf_{\xi\to+\infty}\dfrac{\log |F_j(\xi)|}{-\xi}\geq
(\widehat{\la}_{2+})-\e>0,
$$
where $\la_{i\pm},\widehat{\la}_{2+}$,  are defined in \eqref{def:eig:z},\eqref{def:eig:z2} respectively.

Eq. \eqref{eq:ode00} is a non-homogeneous linear problem. From the variation of parameters formula, we get, as before, its asymptotic behavior
$$\limsup_{\xi\to+\infty}\dfrac{\log |(\Phi_{nj}(\xi),\Phi'_{nj}(\xi))|}{\xi}\leq
\max\{\la_{1+},\la_{2+}\}+\e<0,
$$
and
$$\liminf_{\xi\to+\infty}\dfrac{\log |(\Phi_{nj}(-\xi),\Phi'_{nj}(-\xi))|}{-\xi}\geq
\min\{\widehat{\la}_{1+},\widehat{\la}_{2+}\}-\e>0.
$$
where $\widehat{\la}_{i+}, \la_{i+}$, $i=1,2$, are defined in \eqref{def:eig:z},\eqref{def:eig:z2} respectively.
Then $\Phi_{nj}\to\Phi_n$ in $H^1(\R) $ and $\Phi_n$ solves \eqref{eq:phi1}.
If $\te=1$ then $\|\Phi_n\|_{L^2(\R)^2}
=\sqrt{1-\te^2}=0$ which  contradicts \eqref{eq:phi1}, due to $\|\Phi_h\|_{L^2(\R)^2}=1,$ so $0\le\te<1$ and $\Phi_n\neq 0$.

\end{proof}

\section{Therapeutical implications and conclusions}
\label{discussion}

Hypoxia is a characteristic feature of high-grade gliomas HGGs and arises first as a result of the proliferative activity of cells overcoming the capabilities of oxygen supply by the vasculature. In the case of gliomas there is an additional effect due to the secretion of prothrombotic factors that result in vessel failure. It is interesting to note that hypoxia is only marginal in low grade gliomas \citep{Zagzag2000}, where the vasculature remains mainly intact. Our results indicate that hypoxic events will result in an accelerated progression even when those hypoxic events are local in time, leading to a (may be diffuse and/or small in amplitude) front of invasive cells displaying the hypoxic phenotype (see e.g. Figure 3C of \citet{Zagzag2000}). This fact limits the potential efficacy of therapies targeting oxygenation alone, such as those described by \citet{Hatzikirou2012} or
\citet{Alicia2012} since it is not possible to stop completely the occurrence of hypoxic events in such an aggresive type of tumor.

Is there then an alternative to use re-oxygenation to favor the more proliferative yet less invasive phenotypes? The only possibility in the framework of the simple description used in this paper is to act on the switching time $\tau_{hn}$. In real situations $\tau_{hn}$ is not constant but typically increases with the number of hypoxia cycles experienced by cells until they reach to a state of physical balance with HIF-1$\alpha$. This sequence of  oxygen deprivation episodes drives genetic alterations in tumor cells so that HIF-1$\alpha$ is accumulated in their nucleus even in oxic conditions and thus cells can not return to their previous state \citep{Semenza2003}. Therefore, $\tau_{hn}$ becomes larger than the typical proliferation time. Once the cells take so much time to revert to their less motile state in oxic conditions, the speed of the front increases in a sustained way (see Fig. \ref{cuarta}).
In fact, in vivo analysis of HIF-1 stabilization in well oxygenated tumor areas \citep{Zagzag2000} supports long normalization times.
However, if $\tau_{hn}$ could be kept small, as it happens in the normal cellular physiological state, the tumor invasion speed would drastically decrease to $c^*_n$ in oxic conditions. Thus, in order to be effective, a therapy involving enhanced oxygenation (or vascular normalization) should also act on HIF-1$\alpha$ equilibrium simultaneously. This fact, in addition to other reasons, may be a reason for the failure of antiangiogenic therapies that lead to a radiographic response and increase in the progression free survival \citep{Friedman2009} but not to a real increase in survival of GBM patients \citep{Butowski2011}.
Moreover it is known that in recurrent tumors the effect of antiangiogenic therapy is greatly reduced and a hypothesis for the lack of response after antiangiogenic treatment is an alteration of the tumor phenotype in a highly infiltrative compartment that is angiogenic-independent  \citep{Beal2011}.

A final implication of our results affects the transition from low grade glioma (LGG) to higher grades for the GBMs. Although many factors may induce the transition, the simplest explanation involves the development of hypoxic areas \citep{Swanson2011}.
This idea matches well with the fact that several distinctive features between LGGs and HGGs are related to the presence of hypoxia in the later ones: pseudopalisades and necrosis, microvascular proliferation and a higher cellularity originating the hypoxic events \citep{vaso-occlusive,Alicia2012}. Our results show that preventing vessel failure and the cascade of malignant transformations associated to hypoxia may result in a delay in the appearance of the more invasive phenotype.
Thus LGGs suspicious to undergo the malignant transformation might benefit from  anti-thrombotic medication to avoid the degeneration into HGGs.

Because of the relevant role of hypoxia-inducible factors on the aggressiveness and tumorgenic capacity of glioma cells \citep{Li2009} stabilization of HIF-1$\alpha$ has been also recently proposed as an attractive therapeutic target \citep{Semenza2003}. Our study points out yet another reason based on purely dynamical considerations. The stabilization of HIF-1$\alpha$ would lead to smaller invasion speeds and thus to slower glioma progression if combined with therapies improving tumor oxygenation.

In summary, we have studied a simple model that shows the large impact of localized in time hypoxic events in the progression of gliomas. The mechanism is based on a low amplitude wave of hypoxic cells  that seed the normal tissue in advance and accelerate the progression of the wave of more proliferative cells. The key parameter in this process is the time
that hypoxic phenotype cells take to revert to the normoxic phenotype under oxic environmental conditions.

We have also provided some theoretical results including necessary conditions for having persistent finite amplitude solitons composed of cells with an hypoxic phenotype. From a quantitative point of view,  the hypoxic front wave has two 'components'
one much bigger than  the other one. We have proven that the speed wave of the 'smallest' component of the hypoxic front wave  converges to $c_{hn}^*.$ The stability of those wave packets is a difficult mathematical problem, due to the essential spectrum and will be studied in the future.

\begin{appendix}

\section{Critical points}
\label{ap1}
Let $(v_n,v_h)$ be a solution of \eqref{eq:ode1} and define $V(\xi) = (V_1,V_2,V_3,V_4)=(v_n,v_n',v_h,v_h')$, then \begin{equation}\label{eq:ode3}
\dfrac{dV}{d\xi} = F(V),
\end{equation}
where
\begin{equation}\label{def:F}
F(V) :=
\left(\begin{array}{c} V_2 \\ - \frac{c}{D_n}\, V_2 -\dfrac{1}{D_n\, \tau_{n} }\, \left(1-V_1 - V_3 \right)\, V_1  - \dfrac{1}{D_n\, \tau_{hn}}V_3 \\ V_4 \\ - \frac{c}{D_h}\, V_4 - \dfrac{1}{D_h\, \tau_{h} }\, \left(1-V_1 - V_3\right)\, V_3 +\dfrac{1}{D_h\, \tau_{hn}}V_3\end{array}\right).
\end{equation}

A point $\big(\bar{V}_1,\bar{V}_2,\bar{V}_3,\bar{V}_4\big)$ is an {\it equilibrium} of \eqref{eq:ode3} if $F \big( \bar{V}_1,\bar{V}_2,\bar{V}_3,\bar{V}_4\big)=(0,0,0,0)$.
We will classify these linear equilibria by their generalized eigenspaces, according to the sign of the real part of the eigenvalues of the linearization, giving the decomposition $\R^4=E^u\oplus E^s\oplus E^c$
into the direct sum of unstable, stable and center eigenspaces. 




\begin{lem} The points
\begin{equation}\label{eq:0:1}
\left(\begin{array}{c}0\\ 0\\ 0\\ 0\end{array}\right),\quad
\left(\begin{array}{c}1\\ 0\\ 0\\ 0\end{array}\right),\quad
\dfrac{1 -\tau_{h}/\tau_{hn}}{1\tau_{n} -1/\tau_{h}}\, \left(\begin{array}{c}
-1/\tau_{h}\\ 0\\ 1/\tau_{n} \\0
\end{array}\right),
\end{equation}
$\forall\tau_{n},\tau_{h},\tau_{hn}\in\R$ with $\tau_{n}\neq \tau_{h}$, are equilibrium points.
\end{lem}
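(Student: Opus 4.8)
The plan is to verify directly that $F$ vanishes at each of the three points listed in \eqref{eq:0:1}. First I would read off the first and third rows of \eqref{def:F}: for any equilibrium $\bar V=(\bar V_1,\bar V_2,\bar V_3,\bar V_4)$ these force $\bar V_2=\bar V_4=0$. Since all three candidates already have vanishing second and fourth coordinates, those two equations hold automatically, and the whole problem reduces to checking the remaining two rows. Setting $\bar V_2=\bar V_4=0$ and multiplying the second and fourth components of \eqref{def:F} by $-D_n$ and $-D_h$ respectively, the equilibrium conditions become precisely the vanishing of the reaction nonlinearities $f_1,f_2$ of \eqref{nlin} evaluated at $(u_n,u_h)=(\bar V_1,\bar V_3)$, namely
\begin{subequations}
\begin{align}
\frac{1}{\tau_{n}}(1-\bar V_1-\bar V_3)\bar V_1+\frac{1}{\tau_{hn}}\bar V_3 &=0,\\
\Big[\frac{1}{\tau_{h}}(1-\bar V_1-\bar V_3)-\frac{1}{\tau_{hn}}\Big]\bar V_3 &=0.
\end{align}
\end{subequations}

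For the first two candidates the check is immediate. At $(0,0,0,0)$ both equations hold trivially since $\bar V_1=\bar V_3=0$, and at $(1,0,0,0)$ we have $\bar V_3=0$, which kills the second equation, while $(1-\bar V_1-\bar V_3)\bar V_1=(1-1)\cdot 1=0$ kills the first. The only computation of substance is therefore the third point.

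Writing $\kappa:=\big(1-\tau_{h}/\tau_{hn}\big)\big/\big(1/\tau_{n}-1/\tau_{h}\big)$, so that $\bar V_1=-\kappa/\tau_{h}$ and $\bar V_3=\kappa/\tau_{n}$, I would record the key identity
\[
1-\bar V_1-\bar V_3=1-\kappa\Big(\tfrac{1}{\tau_{n}}-\tfrac{1}{\tau_{h}}\Big)=1-\big(1-\tau_{h}/\tau_{hn}\big)=\frac{\tau_{h}}{\tau_{hn}},
\]
which is exactly what the prescribed coefficient was designed to produce. Substituting it into the second equation yields $\tfrac{1}{\tau_{h}}\cdot\tfrac{\tau_{h}}{\tau_{hn}}-\tfrac{1}{\tau_{hn}}=0$, and into the first yields $\tfrac{1}{\tau_{n}}\cdot\tfrac{\tau_{h}}{\tau_{hn}}\cdot(-\kappa/\tau_{h})+\tfrac{1}{\tau_{hn}}\cdot(\kappa/\tau_{n})=0$, so both vanish. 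The hypothesis $\tau_{n}\neq\tau_{h}$ enters only to guarantee $1/\tau_{n}-1/\tau_{h}\neq 0$, so that $\kappa$ — and hence the third equilibrium — is well defined. I do not expect any genuine obstacle here: the entire content is the identity $1-\bar V_1-\bar V_3=\tau_{h}/\tau_{hn}$, after which both algebraic equations collapse by direct substitution.
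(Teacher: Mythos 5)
Your verification is correct and coincides with what the paper does implicitly: the lemma is stated there without proof, being a routine direct substitution into $F$, which is exactly your argument — rows one and three force $\bar V_2=\bar V_4=0$, after which rows two and four reduce to the vanishing of the reaction terms, and the identity $1-\bar V_1-\bar V_3=\tau_h/\tau_{hn}$ settles the third point. You also handled correctly the misprint in the paper's formula, reading the denominator $1\tau_{n}-1/\tau_{h}$ as $1/\tau_{n}-1/\tau_{h}$, which is the only interpretation under which the third point is indeed an equilibrium and under which the hypothesis $\tau_{n}\neq\tau_{h}$ plays its role.
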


\begin{rk}
For any $\tau_{n},\tau_{h},\tau_{hn}\in\R\setminus \{0\}$, the equilibrium $(\widetilde{V}_1, \widetilde{V}_2,\widetilde{V}_3, \widetilde{V}_4)^T \not\in \Big(\overline{\R_+}\Big)^4,\ $
\end{rk}

Let us denote by $A,\widehat{A},\widetilde{A}$ the Jacobian matrices at the equilibria, i.e.
$
A:=DF(0,0,0,0)$, $ \widehat{A}:=DF(1,0,0,0)$,
$\widetilde{A}:=DF(\widetilde{V}_1,\widetilde{V}_2,\widetilde{V}_3,\widetilde{V}_4)$. Let us denote their corresponding eigenvalues by
$\s(A):=\{\la_{1+},\la_{1-}, \la_{2+},\la_{2-}\}$ , $\s(\widehat{A}):=\{\widehat{\la}_{1+},
\widehat{\la}_{1-}, \widehat{\la}_{2+},\widehat{\la}_{2-}\}$, $\s(\widetilde{A})  = \{\widetilde{\la}_{1+},
\widetilde{\la}_{1-}, \widetilde{\la}_{2+},\widetilde{\la}_{2-}\}$


\begin{thm}\label{th:eq:00}
The point $\Big(\tau_{h};(V_1,V_2,V_3,V_4)\Big)
=\Big(\tau_{hn};(0,0,0,0)\Big)$ is a bifurcation point.


More precisely:

\begin{enumerate}
\item[\rm (a)]Assume that $\ c> c_{n}^*$.

\smallskip

\begin{enumerate}
\item[\rm (a.i)] If moreover
$\tau_{hn}>\tau_{h},$   then  $(0,0,0,0)$ is a stable equilibrium. Moreover

\smallskip

\begin{enumerate}
\item[\rm (a.i.1)] If  $c\geq c_{hn}^*$, then  $(0,0,0,0)$ is a stable node.

\smallskip

\item[\rm (a.i.2)] If  $c< c_{hn}^*$, then  $(0,0,0,0)$ is a stable node-focus.
\end{enumerate}

\smallskip

\item[\rm (a.ii)] If $\tau_{hn}<\tau_{h},$  then  $(0,0,0,0)$ is a  saddle point for any $c\in\R$.
\end{enumerate}

\medskip

\item[\rm (b)]Assume that $\ c< c_{n}^*.$

\smallskip

\begin{enumerate}
\item[\rm (b.i)] If moreover $\tau_{hn}>\tau_{h},$   then  $(0,0,0,0)$ is an stable equilibrium.

\smallskip

\begin{enumerate}
\item[\rm (b.i.1)] if  $c\geq c_{hn}^*$, then  $(0,0,0,0)$ is a stable focus-node.

\smallskip

\item[\rm (b.i.2)] If  $c< c_{hn}^*$, then  $(0,0,0,0)$ is a stable focus.
\end{enumerate}

\medskip

\item[\rm (b.ii)] If $\tau_{hn}<\tau_{h},$  then  $(0,0,0,0)$ is a  saddle point for any $c\in\R$.
\end{enumerate}
\end{enumerate}
\end{thm}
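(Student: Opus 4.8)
The plan is to read off the type of the equilibrium directly from the spectrum of the Jacobian $A:=DF(0,0,0,0)$. By \eqref{def:F}, this Jacobian coincides with the matrix $A$ of \eqref{def:A:F:z}, whose eigenvalues $\lambda_{1\pm},\lambda_{2\pm}$ are already recorded in \eqref{def:eig:z}. The structural fact that makes everything transparent is that $A$ is block upper triangular with respect to the splitting $(V_1,V_2)\oplus(V_3,V_4)$: the lower-left $2\times2$ block vanishes. Hence $\s(A)$ is the disjoint union of the spectra of the two companion-type diagonal blocks, and I only have to analyse two quadratic characteristic polynomials.

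First I would record the two relevant discriminants in terms of the threshold speeds (throughout, $c>0$, as in the travelling-wave setting of \eqref{eq:ode1}). The $\lambda_1$-pair solves a quadratic with discriminant $c^2-4D_n/\tau_n=c^2-(c_n^*)^2$, sum of roots $-c/D_n<0$ and product $1/(D_n\tau_n)>0$; hence $\lambda_{1\pm}$ always have negative real part, being two real negative numbers (a node direction) when $c>c_n^*$ and a complex conjugate pair (a focus direction) when $c<c_n^*$. The $\lambda_2$-pair solves a quadratic with discriminant $c^2-4D_h(1/\tau_h-1/\tau_{hn})=c^2-(c_{hn}^*)^2$ whenever $\tau_{hn}\ge\tau_h$, with sum $-c/D_h<0$ and product $(1/\tau_h-1/\tau_{hn})/D_h$, whose sign is governed entirely by the comparison of $\tau_{hn}$ with $\tau_h$.

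The classification is then pure bookkeeping on these two pairs. When $\tau_{hn}>\tau_h$ the $\lambda_2$-product is positive, so $\lambda_{2\pm}$ have negative real part as well and the origin is stable; this pair forms a node when $c\ge c_{hn}^*$ and a focus when $c<c_{hn}^*$. Crossing this with the $\lambda_1$ dichotomy produces exactly the four stable sub-cases (a.i.1), (a.i.2), (b.i.1), (b.i.2), the hyphenated names simply recording whether each of the two invariant planes carries a node or a focus. When $\tau_{hn}<\tau_h$ the $\lambda_2$-product is negative, forcing $\lambda_{2-}<0<\lambda_{2+}$ for every $c$, so the origin is a saddle; this yields (a.ii) and (b.ii). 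Finally, the bifurcation assertion follows by noting that exactly at $\tau_{hn}=\tau_h$ the $\lambda_2$-product vanishes and $\lambda_{2+}=0$, so the center subspace $E^c$ becomes nontrivial and the dimension of the stable manifold drops by one as $\tau_{hn}$ decreases through $\tau_h$; this exchange of stability is the signature of a bifurcation point at $(0,0,0,0)$.

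I do not expect a genuine obstacle: once the block-triangular structure is observed, the whole statement reduces to a sign analysis of two quadratics. The only points requiring care are matching the four-dimensional nomenclature (node, focus, node-focus, focus-node) to the admissible combinations of real and complex eigenvalue pairs, with the hyphen ordered as (type of the $\lambda_1$ plane)-(type of the $\lambda_2$ plane), and making explicit that \emph{bifurcation point} here refers to the emergence of the zero eigenvalue $\lambda_{2+}=0$ at $\tau_{hn}=\tau_h$ rather than to the construction of any bifurcating branch.
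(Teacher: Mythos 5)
Your proposal is correct and takes essentially the same route as the paper's proof: both linearize at the origin, observe that $\s\big(DF(0,0,0,0)\big)$ splits into the roots of the two quadratics $D_n\la^2+c\la+1/\tau_n$ and $D_h\la^2+c\la+1/\tau_{h}-1/\tau_{hn}$ (you via block upper-triangularity, the paper via the factored characteristic polynomial $P_0$), and then classify the equilibrium by a sign analysis against the thresholds $c_n^*$ and $c_{hn}^*$. Your Vieta-style sum/product bookkeeping in place of the paper's explicit eigenvalue formulas \eqref{def:eig}, and your explicit identification of the bifurcation with the zero eigenvalue $\la_{2+}=0$ at $\tau_{hn}=\tau_h$ (which the paper only records in the remark following its proof), are minor packaging differences rather than a different argument.
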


\begin{proof} We have to analyze the sign of the real part of the eigenvalues of $A$. Linearizing  $f$ and computing for $(0,0,0,0)$ we get


\begin{equation}\label{def:Abar}
A:=DF(0,0,0,0) =
\left(\begin{array}{cccc} 0 & 1 & 0  & 0\\
-\dfrac{1}{D_n\, \tau_{n}} & -\dfrac{{c}}{{D_n}} &  - \dfrac{1}{{D_n}\, {\tau_{hn}}}  & 0\\ 0 & 0 & 0 & 1\\0 & 0 & \dfrac{1}{D_h}\left( \dfrac{1}{\tau_{hn}}-\dfrac{1}{\tau_{h} }\right)  & -\dfrac{{c}}{{D_h}} \end{array}\right).
\end{equation}
its spectrum being given by
$$ P_0(\la) = \det (A-\la I)= \dfrac{1}{D_n\, D_h }\left(D_n\, \lambda^2 + c\, \lambda + \dfrac{1}{\tau_{n}}\right)\, \left(D_h\, \lambda^2 + c\, \lambda + \dfrac{1}{\tau_{h}} - \dfrac{1}{\tau_{hn}}\right).
$$
The eigenvalues can be written in the following way
\begin{equation}\label{def:eig}
\la_{1\pm} := \dfrac{-c\pm \sqrt{c^2-4D_n/\tau_{n}}}{2D_n},\qquad \la_{2\pm} := \dfrac{-c\pm\sqrt{c^2-4D_h\left(1/\tau_{h}- 1/\tau_{hn}\right)}}{2D_h}.
\end{equation}
Let us now analyze the sign of the real part of these eigenvalues

\begin{enumerate}
\item[\rm (a)] Obviously, if
$
c> c_{n}^*
$
then the eigenvalues $\la_{1\pm}$ are negative real numbers.

\smallskip

\begin{enumerate}
\item[\rm (a.i)] If moreover $\tau_{hn}>\tau_{h},$ then $c_{hn}^* \in \R$.

\smallskip

\begin{enumerate}
\item[\rm (a.i.1)] On the other side, if $c\geq c_{hn}^*,$ then the eigenvalues $\la_{2\pm}$ are negative real numbers and as a consequence,  all the eigenvalues of the Jacobian matrix are negative real numbers and the trivial equilibria is a estable node.

\smallskip

\item[\rm (a.i.2)] If  $c< c_{hn}^*$, then  $\la_{2\pm}\in\C^-.$
    As a conclusion, $\s(A)\subset \C^-,$ has two real eigenvalues and two complex eigenvalues, and $(0,0,0,0)$ is a stable node-focus.
\end{enumerate}

\smallskip

\item[\rm (a.ii)] If $\tau_{hn}<\tau_{h},$  then  $\la_{2-}<0<\la_{2+}.$
    Therefore, $\s(A)\subset \R,$ and has positive eigenvalue concluding that $(0,0,0,0)$ is a  saddle point for any $c\in\R$.
\end{enumerate}

\medskip

\item[\rm (b)] If $\ c< c_{n}^*,$  then  $\la_{1\pm}\in\C^-.$

\smallskip

\begin{enumerate}
\item[\rm (b.i)] If moreover $\tau_{hn}>\tau_{h},$   then  $c_{hn}^*\in\R.$

\smallskip

\begin{enumerate}
\item[\rm (b.i.1)] if  $c\geq c_{hn}^*$, then  $\s(A)\subset \C^-,$ and $(0,0,0,0)$ is a stable focus-node.

\smallskip

\item[\rm (b.i.2)] If  $c< c_{hn}^*$, then  $\la_{2\pm}\in\C^-,$   $\s(A)\subset \C^-,$ and $(0,0,0,0)$ is a stable focus.
\end{enumerate}

\medskip

\item[\rm (b.ii)] If $\tau_{hn}<\tau_{h},$  then  $\la_{2-}<0<\la_{2+}$,  and $(0,0,0,0)$ is a  saddle point for any $c\in\R$.

\end{enumerate}

\end{enumerate}
\end{proof}

\begin{rk}
Let us remark that if $c= c_{n}^*,$  then  $\la_{1+}=0$. Also 
if $\tau_{hn}=\tau_{h}$, then, by definition, $c_{hn}^*=0$, and $\la_{2+}=0$.
\end{rk}

\begin{thm}\label{th:eq:10}
For any $0<\tau_{n}<\infty$ and $\tau_{hn}>0,$ the equilibrium $(V_1,V_2,V_3,V_4)=(1,0,0,0)$ is a saddle point, $dim (E^u)=dim (E^s)=2$, and  $E^u$ is tangent to the space $span[(1,\widehat{\la}_{1+},0,0),(0,0,1,\widehat{\la}_{2+})]
$
at $(1,0,0,0)$.
\end{thm}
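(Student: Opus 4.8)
The plan is to linearize the vector field $F$ of \eqref{def:F} at the equilibrium $(1,0,0,0)$ and read off the eigenstructure of the constant matrix $\widehat{A}:=DF(1,0,0,0)$. The only points requiring care in computing $\widehat{A}$ are that differentiating the logistic term $(1-V_1-V_3)V_1$ and evaluating at $V_1=1$ flips the sign of the $(2,1)$ entry relative to \eqref{def:Abar}, and that the term $(1-V_1-V_3)V_3$ in $F_4$ vanishes to first order at $V_1=1,\,V_3=0$, so that $\tau_{h}$ drops out entirely. The outcome is the block upper-triangular matrix
\begin{equation*}
\widehat{A}=
\begin{pmatrix}
0 & 1 & 0 & 0\\
\dfrac{1}{D_n\tau_{n}} & -\dfrac{c}{D_n} & \dfrac{1}{D_n}\left(\dfrac{1}{\tau_{n}}-\dfrac{1}{\tau_{hn}}\right) & 0\\
0 & 0 & 0 & 1\\
0 & 0 & \dfrac{1}{D_h\tau_{hn}} & -\dfrac{c}{D_h}
\end{pmatrix},
\end{equation*}
whose vanishing lower-left block makes the characteristic polynomial factor as the product of the two quadratics $\la^2+(c/D_n)\la-1/(D_n\tau_{n})$ and $\la^2+(c/D_h)\la-1/(D_h\tau_{hn})$.

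Next I would analyze the four roots, which are precisely the $\widehat{\la}_{1\pm},\widehat{\la}_{2\pm}$ of \eqref{def:eig:z2}. Since $D_n,D_h,\tau_{n},\tau_{hn}>0$, both discriminants $c^2+4D_n/\tau_{n}$ and $c^2+4D_h/\tau_{hn}$ are strictly positive, so all four eigenvalues are real; and because each quadratic has a strictly negative constant term (the product of its two roots), each contributes exactly one positive and one negative root. Hence $\widehat{\la}_{1-},\widehat{\la}_{2-}<0<\widehat{\la}_{1+},\widehat{\la}_{2+}$ with no zero eigenvalue, for every $c\in\R$ and every $\tau_{n},\tau_{hn}>0$ (and independently of $\tau_{h}$). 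This establishes that $(1,0,0,0)$ is a hyperbolic saddle with $E^c=\{0\}$ and $\dim E^u=\dim E^s=2$, the robust, parameter-independent core of the statement.

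Finally, for the tangency claim I would invoke the stable/unstable manifold theorem \citep{Coddington-Levinson}: at a hyperbolic fixed point the unstable manifold is tangent to the unstable eigenspace $E^u$, so it suffices to exhibit a basis of $E^u$, namely the eigenvectors for $\widehat{\la}_{1+}$ and $\widehat{\la}_{2+}$. The block-triangular form supplies these along the diagonal directions: solving $(\widehat{A}-\widehat{\la}_{1+}I)v=0$ forces the $(V_3,V_4)$ components to vanish and yields $v=(1,\widehat{\la}_{1+},0,0)$, while the $(V_3,V_4)$-block of the $\widehat{\la}_{2+}$ eigenvector is $(1,\widehat{\la}_{2+})$, giving the direction $(0,0,1,\widehat{\la}_{2+})$; these two vectors span the plane to which $E^u$ is tangent at $(1,0,0,0)$.

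The main obstacle I anticipate is this last step rather than the eigenvalue count. The off-diagonal coupling entry $\tfrac{1}{D_n}(1/\tau_{n}-1/\tau_{hn})$ in $\widehat{A}$ means that the genuine $\widehat{\la}_{2+}$ eigenvector acquires a nonzero component in the $(V_1,V_2)$-plane whenever $\tau_{n}\neq\tau_{hn}$, so to pin down the claimed span one must either verify that $\tau_{n}=\tau_{hn}$ decouples the blocks exactly, or read the statement as identifying the principal diagonal directions $(1,\widehat{\la}_{1+},0,0)$ and $(0,0,1,\widehat{\la}_{2+})$ that generate $E^u$. Handling this coupling term cleanly, via the explicit solution of $(\widehat{A}-\widehat{\la}_{2+}I)v=0$, is where I would concentrate the care.
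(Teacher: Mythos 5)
Your linearization, the block--triangular factorization of the characteristic polynomial, and the sign analysis of the four roots reproduce the paper's own proof essentially verbatim: the paper computes the same $\widehat{A}$, the same factorization $P_1(\la) =\frac{1}{D_n D_h}\bigl(D_n\la^2+c\la-\frac{1}{\tau_{n}}\bigr)\bigl(D_h\la^2+c\la-\frac{1}{\tau_{hn}}\bigr)$, and draws the same conclusion $\widehat{\la}_{1-}<0<\widehat{\la}_{1+}$, $\widehat{\la}_{2-}<0<\widehat{\la}_{2+}$, hence a saddle with $dim(E^u)=dim(E^s)=2$. That part of your argument is correct and identical in substance to the paper's.

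Where you diverge is the tangency claim, and the concern in your final paragraph is not a technicality you failed to dispatch --- it is a genuine defect in the statement, which the paper's proof glosses over with ``it is not difficult to prove that $E^u$ is tangent to $span[(1,\widehat{\la}_{1+},0,0),(0,0,1,\widehat{\la}_{2+})]$''. Carrying out the computation you propose settles the matter: solving $(\widehat{A}-\widehat{\la}_{2+}I)v=0$ and normalizing $v_3=1$ gives
\begin{equation*}
v=\bigl(\beta,\ \beta\,\widehat{\la}_{2+},\ 1,\ \widehat{\la}_{2+}\bigr),
\qquad
\beta=\frac{\dfrac{1}{D_n}\left(\dfrac{1}{\tau_{n}}-\dfrac{1}{\tau_{hn}}\right)}
{\widehat{\la}_{2+}^{2}+\dfrac{c}{D_n}\,\widehat{\la}_{2+}-\dfrac{1}{D_n\,\tau_{n}}}\,,
\end{equation*}
where the denominator is nonzero as long as $\widehat{\la}_{2+}\neq\widehat{\la}_{1+}$ (automatic unless the two quadratics share their positive root, since $\widehat{\la}_{2+}>0>\widehat{\la}_{1-}$). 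Thus the unstable eigenspace is $span[(1,\widehat{\la}_{1+},0,0),\,v]$, and since $v-\beta\,(1,\widehat{\la}_{1+},0,0)=\bigl(0,\ \beta(\widehat{\la}_{2+}-\widehat{\la}_{1+}),\ 1,\ \widehat{\la}_{2+}\bigr)$, this plane coincides with the claimed $span[(1,\widehat{\la}_{1+},0,0),(0,0,1,\widehat{\la}_{2+})]$ only when $\beta(\widehat{\la}_{2+}-\widehat{\la}_{1+})=0$, i.e.\ when $\tau_{n}=\tau_{hn}$ or the unstable eigenvalues coincide. For generic parameters the tangency assertion of the theorem is therefore false as written; the correct statement replaces $(0,0,1,\widehat{\la}_{2+})$ by $v$ above. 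Your appeal to the unstable manifold theorem (the nonlinear manifold is tangent at the fixed point to the unstable eigenspace of the linearization) is the right tool, and the explicit eigenvector computation you flagged as the delicate step is exactly what was missing from the paper --- it just resolves the question against the stated span rather than in its favor.
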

\begin{proof}

Since
$$
\widehat{A}=DF(1,0,0,0) =
\left(\begin{array}{cccc} 0 & 1 & 0 & 0\\
\dfrac{1}{D_n\, \tau_{n}}  & -\dfrac{{c}}{{D_n}}\ \ \ &
\dfrac{1}{D_n}\left(\dfrac{1}{\tau_{n}}-\dfrac1{\tau_{hn}}\right) & 0\\ 0 & 0 & 0 & 1\\
0 & 0 & \dfrac{1}{{D_h}\, {\tau_{hn}}}  & -\dfrac{{c}}{{D_h}} \end{array}\right),
$$
its spectrum is given by
$$
P_1(\la) =\dfrac{1}{D_n\, D_h}\left(D_n\, {\lambda}^2 + c\, \lambda - \dfrac{1}{\tau_{n}}\right)\, \left(D_h\, {\lambda}^2 + c\,  \lambda - \dfrac{1}{\tau_{hn}}\right).
$$
The eigenvalues are given by
\begin{equation}\label{def:eig:2}
\widehat{\la}_{1\pm} := \dfrac{-c\pm \sqrt{c^2+4D_n/\tau_{n}}}{2D_n},\qquad \widehat{\la}_{2\pm} := \dfrac{-c\pm\sqrt{c^2+4D_h/\tau_{hn}}}{2D_h}.
\end{equation}
Whenever  $\tau_{hn},\tau_{n}>0,$ the eigenvalues $\widehat{\la}_{1\pm},\ \widehat{\la}_{2\pm}\in\R$ and satisfy $
\widehat{\la}_{1-}<0<\widehat{\la}_{1+},\qquad \widehat{\la}_{2-}<0<\widehat{\la}_{2+},
$
and as a consequence,  the non-trivial equilibria $(1,0,0,0)^T$ is a saddle point, and  $dim (E^u)=dim (E^s)=2$. Moreover, it is not difficult to prove that $E^u$ is tangent to $span[(1,\widehat{\la}_{1+},0,0),$ $(0,0,1,\widehat{\la}_{2+})]$ at $(1,0,0,0)$, which completes the proof.
\end{proof}
\end{appendix}

\section*{Acknowledgements}

This work has been partially supported by the Ministerio de Econom\'{\i}a y Competitividad (Spain), under grants MTM2012-31073 and MTM2012-31298.

   \end{document}